    \newwrite\bibnotes
    \def\bibnotesext{Notes.bib}
\write\bibnotes{@CONTROL{REVTEX41Control}}
\write\bibnotes{@CONTROL{%
    apsrev41Control,author="08",editor="1",pages="1",title="0",year="0"}}
\write\@auxout{\string\citation{apsrev41Control}}%
\newcommand{\nt}{{\sc not}}
\newcommand{\nts}{\nt s}
\newcommand{\cnot}{{\sc C-not}}
\newcommand{\cnots}{\cnot s}
\newcommand{\ot}{\otimes}
\newcommand{\eps}{\varepsilon}
\newcommand{\cO}{\mathcal{O}}
\newcommand{\cN}{\mathcal{N}}
\newcommand{\ii}{\mathrm{i}}
\newcommand{\e}{\operatorname{e}}
\newcommand{\SP}{\operatorname{SP}}
\newcommand{\Piv}{\operatorname{Piv}}
\newcommand{\SSP}{\operatorname{SSP}}
\newcommand{\Dec}{\operatorname{Dec}}
\newcommand{\ed}{\operatorname{ed}}
\newcommand{\envel}{\operatorname{env}}
\newcommand{\elim}{\operatorname{elim}}
\newcommand{\nnz}{\operatorname{nnz}}
\renewcommand{\Re}{\operatorname{Re}}
\DeclarePairedDelimiter{\ceil}{\lceil}{\rceil}
\newtheorem{theorem}{Theorem}
\newtheorem{corollary}[theorem]{Corollary}
\newtheorem{lemma}[theorem]{Lemma}
\theoremstyle{definition}
\newtheorem{definition}[theorem]{Definition}
\newtheorem{remark}[theorem]{Remark}
\newtheorem{algorithm}{Algorithm}
\definecolor{mylinkcolor}{rgb}{0,0,0.8}
\definecolor{citecol}{rgb}{0.1,0.55,0.1}
\begin{document}


\title{Quantum Circuits for Sparse Isometries}

\author{Emanuel Malvetti} \email{emanuel.malvetti@tum.de} \affiliation{Department of Chemistry, Technische Universit\"at M\"unchen, Lichtenbergstra{\ss}e 4, 85747 Garching, Germany}
\author{Raban~Iten} \email{itenr@itp.phys.ethz.ch} \affiliation{ETH Z\"urich, 8093 Z\"urich, Switzerland}
\author{Roger~Colbeck} \email{roger.colbeck@york.ac.uk} \affiliation{Department of Mathematics, University of York, YO10 5DD, UK}\orcid{0000-0003-3591-0576}

\date{$9^{\text{th}}$ March 2021}

\begin{abstract}
We consider the task of breaking down a quantum computation given as an isometry into \cnots{} and single-qubit gates, while keeping the number of \cnot{} gates small.  Although several decompositions are known for general isometries, here we focus on a method based on Householder reflections that adapts well in the case of sparse isometries. We show how to use this method to decompose an arbitrary isometry before illustrating that the method can lead to significant improvements in the case of sparse isometries.  We also discuss the classical complexity of this method and illustrate its effectiveness in the case of sparse state preparation by applying it to randomly chosen sparse states.
\end{abstract}

\maketitle

\section{Introduction}

A general quantum computation on an isolated system can be represented by a unitary matrix. In order to execute such a computation on a quantum computer, it is common to decompose the unitary into a quantum circuit, i.e., a sequence of quantum gates that can be physically implemented on a given architecture. There are different universal gate sets for quantum computation. Here we choose the universal gate set consisting of \cnot{} and single-qubit gates~\cite{barenco}. We measure the cost of a circuit by the number of \cnot{} gates since in many architectures they are more difficult to implement than single-qubit gates. In addition, the number of single-qubit gates is bounded by about twice the number of \cnots{}~\cite{two-qubit,two-qubit-2}, so the \cnot{} counts are illustrative of the total gate counts for this gate set.

It can also be useful to consider operations where the dimensions of the input are different to those of the output. An isometry from $m$ qubits to $n$ qubits can be represented by a $2^n\times2^m$ matrix $V$ satisfying $V^\dagger V=I$. Unitaries and state preparation are special cases of isometries where $m=n$ or $m=0$ respectively. An isometry with $m\neq n$ can be implemented by extending it to a unitary and implementing the unitary instead. The freedom in the extension can be exploited to lower the number of gates required. The main aim of the present paper is to consider decompositions that adapt well to the case of sparse isometries, i.e., those with many zero entries in the computational basis.

We briefly summarize previous work on decompositions using the gate set consisting of \cnot{} and single-qubit gates.  Arbitrary unitaries can be decomposed using $\frac{23}{48} 4^n$ \cnots{}~\cite{diagonal} to leading order, about twice as many as the best known lower bound. The most efficient known method for preparing arbitrary states requires about $2^n$ \cnots{} to leading order~\cite{plesch,ucg,isometries}, which is about twice the best known lower bound~\cite{plesch}. The decomposition of arbitrary isometries has been considered in~\cite{knill, isometries}. Near optimal methods for decomposing arbitrary isometries exist, and again they achieve \cnot{} counts approximately twice as large as the best known lower bounds. The implementation of quantum channels has been considered in~\cite{channels} and these have been implemented along with POVMs and instruments in~\cite{compiler}.

Particular classes of sparse unitaries have been studied in previous work, e.g., diagonal gates~\cite{diagonal}, uniformly controlled single-qubit gates~\cite{ucg} and permutation gates~\cite{reversible}. The case of efficiently computable sparse unitaries with a polynomial number of non-zero entries per row or column has also been considered in Ref.~\cite{sparse-unitaries}. The authors showed that these can be implemented using a polynomial number of gates, although explicit decompositions were not given.

In this work we present methods for decomposing isometries based on Householder decompositions, whose significance for (dense) circuit decompositions has been studied previously. Ref.~\cite{vadym} showed that $n$ qubit unitaries to within approximation $\eps$ can be decomposed using the Clifford+T gate library with $\cO(4^nn(\log(1/\eps)+n))$ gates. Ref.~\cite{sparsedays} gave a decomposition of unitaries with a speed advantage over methods based on the Cosine-Sine decomposition~\cite{diagonal}, although using $4^n$ \cnots{} asymptotically, rather than $\frac{23}{48}4^n$.

For dense isometries our method achieves \cnot{} counts that are close to those of~\cite{knill, isometries} (which are near optimal).  The advantage of using Householder decompositions becomes apparent when applied to sparse isometries, leading to \cnot{} counts that depend on the number of non-zero entries and the positions of the non-zero elements.  Our Householder-based algorithm proceeds one column at a time, replacing each successive column by a computational basis vector.  For sparse isometries, the improvement we get over other such decompositions is due to the relatively small amount of fill-in that occurs in each step. By contrast, using the decompositions of~\cite{knill,isometries}, which also work one column at a time, the operations performed to replace the first column typically remove all of the sparseness elsewhere.

In the case of state preparation of an $n$ qubit state $\ket{v}$ our decompositions require $\cO(n\nnz(v))$ \cnots{}, where $\nnz(v)$ is the number of non-zero entries in the state.  The counts for isometries beyond state peparation are more complicated because fill-in occurs as the algorithm proceeds, and the amount of fill-in depends on the positions of the non-zero entries.  We have three different methods for a sparse isometry $W$ from $m$ qubits to $n$ qubits, the simplest to count requires $\cO(n\nnz(W))$ \cnots{}.  Table~\ref{tab:explicit-iso-counts} summarizes our main results and gives more explicit counts. The main decompositions presented in this work have also been implemented using {\sc UniversalQCompiler}~\cite{compiler}.

\begin{table*}
\centering
\begin{tabular}{l|l|l|l}
Gate (method) & Ancillas & \cnot{} count & Reference  \\[1ex]
\hline
  State preparation&0 (1 dirty if $s=n-1$) & $(n+16s-9)\nnz(v)+\frac{23}{24}2^s$ & Corollary~\ref{cor:ssp}\\[1ex]
  State preparation&$\ceil{\frac{s}{2}-1}$ clean & $(n+6s-7)\nnz(v)+\frac{23}{24}2^s$ & Corollary~\ref{cor:ssp}\\[1ex]
  Isometry (basic)  &1 dirty & $(17n-5)\elim(W,\rho,\sigma)+(51n+34m-44)2^m$ & Remark~\ref{rmk:sparse-iso} \\[1ex]
Isometry (basic)  & $\ceil{\frac{n-3}{2}}$ clean & $(7n-3)\elim(W,\rho,\sigma)+(21n+24m-38)2^m$ & Lemma~\ref{sparse-iso} \\[1ex]
Isometry (fixed envelope)  & 1 dirty & $4\ed(\Pi_\rho W \Pi_\sigma)+\cO(n2^n)$ & Remark~\ref{rmk:fixed-env} \\[1ex]
Isometry (no fill-in)  & 1 clean + 1 dirty & $(17n + 12)\nnz(W)+(34n+34m-5)2^m$ & Lemma~\ref{lemma:no-fill-in} \\[1ex]
Isometry (no fill-in)  & $\ceil{\frac{n}{2}}$ clean& $(7n + 4)\nnz(W)+(14n+24m-21)2^m$ & Lemma~\ref{lemma:no-fill-in} \\[1ex]
\hline
\end{tabular}
\caption{{\bf Main results of this work.} This table gives \cnot{} counts (upper bounds) for the decompositions for a sparse state of $n$ qubits, $v$, or sparse isometry from $m$ to $n$ qubits, $W$. For sparse isometries we have three methods (basic, fixed envelope or no fill-in).  Here $\nnz(\cdot)$ is the number of non-zero entries of a state or of an isometry in the computational basis and $s=\ceil{\log_2 \nnz(v)}$. 
The number of eliminations $\elim(W,\rho,\sigma)$ when using the elimination order given by the row and column permutations $\rho$ and $\sigma$ is defined in Eq.~\eqref{eq:elim} and depends on the positions of the non-zero entries in the isometry, which affects how many zero elements become non-zero as the algorithm proceeds.
The quantity $\ed(W)$ is defined in Definition~\ref{def:ed}, and counts the number of elements (zero and non-zero) between the matrix envelope (as defined in Definition~\ref{def:env}) and the diagonal, and it also depends on the positions of the non-zero entries in the isometry. By $\Pi_\rho$ and $\Pi_\sigma$ we denote the permutation matrices corresponding to the permutations $\rho$ and $\sigma$. By Lemma~\ref{lemma:elim-ed-bound} we have $\elim(W,\rho,\sigma)\leq\ed(\Pi_\rho W\Pi_\sigma)$, and it trivially holds that $\ed(\Pi_\rho W\Pi_\sigma)\leq 2^{m+n}$, but our decompositions are most useful when $\elim(W,\rho,\sigma)\ll 2^{m+n}$ or $\ed(\Pi_\rho W\Pi_\sigma)\ll 2^{m+n}$.  An ancillary qubit is called clean if it starts in a known computational basis state and is restored to that state after the computation. It is called dirty if it starts in an unknown state and is restored after the computation.}
\label{tab:explicit-iso-counts}
\end{table*}

The remainder of this paper is organized as follows. In Section~\ref{sec:ssp} we discuss the case of state preparation, which is a building-block for the later cases. The main idea behind the decomposition there is to use pivoting gates, which permute the entries of the state such that the non-zero entries are grouped together, in effect reducing to state preparation on a smaller system. In Section~\ref{sec:Householder reflection} we show that Householder reflections with respect to a hyperplane orthogonal to a sparse state $\ket{v}$ can be implemented using sparse state preparation, again using at most $\cO(n\nnz(v))$ \cnots{}. In Section~\ref{sec:hdec} we show how a general isometry $V$ from $m$ to $n$ qubits can be implemented using Householder reflections before moving to the sparse case, illustrating the advantage of Householder reflections for the latter. In Section~\ref{sec:classical-complexity} we show how the main decompositions of this paper can be implemented and derive the classical complexities of the implementations.  Finally, in Section~\ref{sec:numerical-results} we present \cnot{} counts for sparse state preparation found by explicit use of our method, indicating the improvements compared to dense state preparation methods.

\section{\label{sec:ssp}Sparse State Preparation}

State preparation is the main building-block of the decompositions used in this work. In this section we introduce a method for implementing sparse state preparation more efficiently than is possible in the dense case. The main results from this section are summarized in Table~\ref{tab:explicit-counts}

\begin{table*}
\centering
\begin{tabular}{l|l|l|l|l}
Gate  & Ancillas &  Notation & \cnot{} count & Reference  \\[1ex]
\hline
 Pivoting & 0 (1 dirty if $s=n-1$)& $\Piv_v$ & $(n+16s-9)\nnz(v)$ & Lemma~\ref{lemma:pivot}\\[1ex]
Pivoting & $\ceil{\frac{s}{2}-1}$ clean & $\Piv_v$ & $(n+6s-7)\nnz(v)$ & Lemma~\ref{lemma:pivot}\\[1ex]
Permuted diagonal isometry & 0 (1 dirty if $m=n-1$)& $\Pi_n I_{n,m} \Delta_m$ & $(n+34m-34)2^m$ & Lemma~\ref{lemma:perm-diag}\footnote{\label{ft:t}Tighter bounds follow by using the more precise counts for permutations in Appendix~\ref{app:perm}.} \\[1ex]
Permuted diagonal isometry & $\ceil{\frac{m}{2}-1}$ clean& $\Pi_n I_{n,m} \Delta_m$ & $(n+24m-32)2^m$ & Lemma~\ref{lemma:perm-diag} \\[1ex]
  Householder reflection up to $\Delta\Pi$&  1 dirty & $\Delta\Pi H_v$ & $(n+16s-5)\nnz(v)+16n$ & Lemma~\ref{lemma:hr-up-to} \\[1ex]
  Householder reflection up to $\Delta\Pi$ & $\ceil{\frac{n-3}{2}}$ clean& $\Delta\Pi H_v$ & $(n+6s-3)\nnz(v)+6n$  & Lemma~\ref{lemma:hr-up-to} \\[1ex]
\hline
\end{tabular}
\caption{{\bf \cnot{} counts (upper bounds) for the additional decompositions introduced in this work.} Here $n$ is the number of output qubits, $m$ the number of input qubits, $\nnz(\cdot)$ the number of non-zero entries of a state or of an isometry in the computational basis and $s=\ceil{\log_2 \nnz(v)}$.
Furthermore $\Pi_n$ denotes an arbitrary permutation gate on $n$ qubits, $\Delta_m$ denotes an arbitrary diagonal gate on $m$ qubits and $I_{n,m}$ denotes the $2^m$ first columns of the identity gate on $n$ qubits.
All results follow from the given reference and the results in Table~\ref{tab:counts} and other entries in the present table. Slightly different results can be derived by using different decompositions for multi-controlled \nt{} gates. An ancillary qubit is called clean if it starts in a known computational basis state and is restored to that state after the computation. It is called dirty if it starts in an unknown state which must be restored after the computation.}
\label{tab:explicit-counts}
\end{table*}

\begin{definition}
Let $\ket{v}$ be a state on $n$ qubits. We say that a unitary $\SP_v$ on $n$ qubits implements state preparation for $\ket{v}$ if
$$\SP_v\ket{0}_n=\ket{v}.$$
\end{definition}

We start by presenting a useful pivoting algorithm for permuting entries in a sparse state such that all non-zero entries are grouped together. The idea is to then perform a decomposition scheme for dense state preparation on the grouped entries, which correspond to the state of a subset of the $n$ qubits.

\begin{lemma}\label{lemma:pivot}
  Let $\ket{v}$ be a state on $n$ qubits and let $\nnz(v)$ denote the number of non-zero entries of $\ket{v}$ in the computational basis. Let $s=\ceil{\log_2 \nnz(v)}$. Then there exists a permutation gate $\Piv_v$ that disentangles $n-s$ qubits in a computational basis state, i.e., $\Piv_v \ket{v} = \ket{i}_{n-s}\ot\ket{\tilde{v}}_s$ for some $s$-qubit state $\ket{\tilde{v}}_s$, and $i\in\{0,1,\ldots,2^{n-s}-1\}$.

  Let $\cN_{\Piv}(n,s)$ denote the number of \cnots{} required for pivoting any state on $n$ qubits with at most $2^s$ non-zero entries and let $\cN_{C^n_s(X)}$ denote the number of \cnots{} required to implement an $s$-controlled \nt{} when there are a total of $n$ qubits. Then
  $$\cN_{\Piv}(n,s) \leq (n-1 + \cN_{C^n_s(X)})\nnz(v).$$
Explicit counts are given in Table~\ref{tab:explicit-counts}.
\end{lemma}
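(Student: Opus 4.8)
The plan is to separate the claim into an existence part and a gate-count part. For existence, note that since $\nnz(v)\leq 2^s$ the non-zero positions of $\ket{v}$ form a set $S\subseteq\{0,1\}^n$ with $|S|\leq 2^s$, while the subcube obtained by fixing the first $n-s$ qubits to some value $i$ contains exactly $2^s$ computational basis states. Hence there is an injection of $S$ into this subcube, and any such injection extends to a permutation of $\{0,1\}^n$; the corresponding permutation gate $\Piv_v$ then satisfies $\Piv_v\ket{v}=\ket{i}_{n-s}\ot\ket{\tilde v}_s$, so the first $n-s$ qubits are left in a product computational basis state and are thus disentangled. The real content is to realise one such permutation by a circuit whose cost matches the stated bound.

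For the circuit I would build the permutation incrementally, processing the non-zero entries one at a time and maintaining the invariant that after step $j$ the entries already processed occupy distinct addresses inside the target subcube while the remaining entries are untouched. Each step moves one remaining non-zero entry at position $x$ to a currently empty slot $t$ of the subcube, which is exactly the transposition $(x\ t)$ of two $n$-bit strings; because $t$ carries zero amplitude this relocates the amplitude without creating collisions. A transposition of two strings that differ on a set $D$ of bit positions can be implemented by first applying \cnots{} (and \nt{} gates) that make all but one of the bits in $D$ agree for the two strings, reducing the differing set to a single qubit, then flipping that qubit with one multi-controlled \nt{}, and finally restoring the remaining entries. This yields the advertised per-entry structure, namely a number of \cnots{} that is linear in $n$ together with a single $C^n_s(X)$ gate, and summing over the $\nnz(v)$ entries gives $\cN_{\Piv}(n,s)\leq(n-1+\cN_{C^n_s(X)})\nnz(v)$.

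The main obstacle, and the step needing the most care, is justifying that each relocation requires only $s$ controls rather than the $n-1$ controls needed to isolate an arbitrary basis state. This is precisely where the sparsity assumption $|S|\leq 2^s$ must be exploited: at each step the multi-controlled \nt{} only has to behave correctly on the at most $2^s$ currently occupied positions, so it suffices to choose the alignment \cnots{} so that the $s$ control qubits carry an address singling out the target among the occupied positions. I would establish this by fixing a convenient processing order and proving, by induction on $j$, that the occupied positions can always be separated in this way using at most $s$ control qubits and a linear-in-$n$ number of alignment \cnots{}, using that their count never exceeds $2^s$. The subtlety is twofold: a single fixed set of $s$ coordinates need not distinguish all occupied positions simultaneously, so the alignment \cnots{} must be chosen afresh at each step; and keeping the \cnot{} contribution at $n-1$ rather than $2(n-1)$ per entry requires either folding the undoing of the alignment into the next step or arranging it so that no separate uncomputation is needed. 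Once the per-step cost $n-1+\cN_{C^n_s(X)}$ is established, the explicit counts in Table~\ref{tab:explicit-counts} follow by substituting the known \cnot{} counts for $C^n_s(X)$ in the clean- and dirty-ancilla settings.
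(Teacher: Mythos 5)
Your construction has the same skeleton as the paper's Algorithm~\ref{alg:piv} (insert the non-zero entries one at a time into a target block, each insertion using alignment \cnots{} plus a single $s$-controlled \nt{}), but the invariant you impose --- that the not-yet-processed entries are \emph{untouched}, so that each step acts as an exact transposition on the occupied positions --- is stronger than necessary, and it is exactly this choice that creates the two obstacles you flag and then leave open. First, ``untouched'' forces the separation property: after alignment, no occupied position other than the one being moved may match the $s$-bit control pattern, or the multi-controlled \nt{} will disturb it. You assert that this can always be arranged with $s$ controls and a linear-in-$n$ number of alignment \cnots{} and propose an induction, but no proof is given, and it is not clear it fits the budget: a union-bound argument shows that some choice of $s$ parities of the bits separates the moved entry from the at most $2^s-1$ others, but implementing $s$ arbitrary parities can cost $\cO(sn)$ \cnots{}, and your alignment must simultaneously bring the moved entry adjacent to the destination. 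Second, even granting separation, your step is align + multi-controlled \nt{} + restore, costing $2(n-1)+\cN_{C^n_s(X)}$ per entry, which overshoots the claimed $n-1+\cN_{C^n_s(X)}$; you acknowledge this but the suggested remedies are not worked out. So the proposal is incomplete precisely at its load-bearing steps.

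Both problems disappear once the invariant is weakened to the one the paper actually uses: \emph{no non-zero entry ever leaves the target block}, while entries outside the block are allowed to be shuffled arbitrarily and are never restored. Concretely, pick a non-zero entry at $\ket{{\bf t'}}_{n-s}\ket{{\bf r'}}_s$ outside the block and an \emph{empty} slot $\ket{{\bf t}}_{n-s}\ket{{\bf r}}_s$ inside it; choose a qubit $k$ among the first $n-s$ on which ${\bf t'}$ and ${\bf t}$ differ; controlling on qubit $k$, use at most $n-1$ \cnots{} to map the chosen entry to $\ket{{\bf t''}}_{n-s}\ket{{\bf r}}_s$ with ${\bf t''}$ and ${\bf t}$ differing only at $k$; finally apply one $s$-controlled \nt{} whose controls require the last $s$ qubits to be $\ket{{\bf r}}_s$ and whose target is qubit $k$. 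The alignment \cnots{} fire only on states whose $k$-th bit disagrees with ${\bf t}$, so they never touch the block; the $s$-controlled \nt{} acts inside the block only on slot ${\bf r}$, which is empty; and flipping bit $k$ maps an outside position into the block only if that position is the aligned source. Hence nothing has to be singled out among the occupied positions, nothing has to be uncomputed, and each insertion costs at most $n-1$ \cnots{} plus one $C^n_s(X)$, which is the stated bound.
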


\begin{proof}
  If $s=n$ there is nothing to do so assume $s < n$. The given state can be represented as a column vector in the computational basis, where the basis states can be written in terms of $n$ binary indices $\ket{b_1b_2\ldots b_n}$, which we split into two $\ket{b_1\ldots b_{n-s}}\ket{b_{n-s+1}\ldots b_n}$. The first of these corresponds to a block of the vector.  The goal is then to move all non-zero elements of $\ket{v}$ into a single block. We achieve this by using the following algorithm.
  \begin{algorithm}\label{alg:piv}
  \quad
    \begin{enumerate}
    \item\label{st:11} If all non-zero entries are in the target block, we stop the algorithm.
    \item Pick a non-zero entry outside the target block and a zero entry inside the target block. Write the basis state of the non-zero entry as $\ket{{\bf t'}}_{n-s}\ket{{\bf r'}}_s$ and that of the zero entry as $\ket{{\bf t}}_{n-s}\ket{{\bf r}}_s$.
    \item Choose a qubit on which ${\bf t'}_{n-s}$ and ${\bf t}_{n-s}$ differ.
    \item With this as a control qubit, use at most $n-1$ \cnots{} to adjust $\ket{{\bf t'}}_{n-s}\ket{{\bf r'}}_s$ to $\ket{{\bf t''}}_{n-s}\ket{{\bf r}}_s$ such that ${\bf t''}$ and ${\bf t}$ differ only on the control qubit.
    \item Use one $s$-controlled \nt{} (controlling on $\ket{{\bf r}}_s$) to exchange $\ket{{\bf t''}}_{n-s}\ket{{\bf r}}_s$ and $\ket{{\bf t}}_{n-s}\ket{{\bf r}}_s$. Note that none of the other entries of the target block are affected by this process.
    \item Return to Step~\ref{st:11}.
    \end{enumerate}
  \end{algorithm}
At the end of this algorithm, all non-zero entries of $\ket{v}$ are in the target block. Thus we used at most $n-1$ \cnots{} and one $s$-controlled \nt{} to insert one non-zero entry into the target block. Since no non-zero entry ever leaves the target block, the claimed bound follows.
\end{proof}

\begin{remark}
  In the preceding Lemma we split the vector into blocks, where we have chosen to separate the $n-s$ most significant qubits from the $s$ remaining ones. However, any splitting into $n-s$ and $s$ qubits would work. In addition, the target block and the order in which to proceed are not fixed and none of these choices affect any of the decompositions used in this work. Making these choices in the right way can reduce the \cnot{} counts. See Section~\ref{sec:ssp_impl} for details on implementing the algorithm.
\end{remark}

\begin{remark}\label{rmk:pivcount}
  Using Table~\ref{tab:counts} the following bounds on the \cnot{} count hold if $a$ dirty ancilla are available.
  \begin{align*}
    \cN_{\Piv}(n,s)&\leq(n+16s^2-28s-3)\nnz(v)\\&\qquad\qquad\qquad\qquad\qquad\text{ for }n+a\geq s+1\\
    \cN_{\Piv}(n,s)&\leq27n2^n\text{ for }n+a\geq s+1\\
    \cN_{\Piv}(n,s)&\leq(n+16s-9)\nnz(v)\text{ for }n+a\geq s+2\\
    \cN_{\Piv}(n,s)&\leq(n+8s-13)\nnz(v)\\&\qquad\qquad\qquad\text{ for }n+a\geq s+\left\lceil\frac{s}{2}\right\rceil,s\geq5
  \end{align*}
  In the case $n=s+1$, $a=0$, there are not enough qubits available for any of the known decompositions of an $s$-controlled \nt{} and hence we can decompose it as an $s$-controlled single-qubit gate (first inequality) or note that the entire pivoting can be written as a permutation (second inequality -- see Appendix~\ref{app:perm} for a more precise bound). The first of these gives a better count for small $n$.
\end{remark}

\begin{table*}
\centering
\begin{tabular}{l|l|l|l|l}
Gate  & Ancillas &  Notation & \cnot{} count & Reference  \\[1ex]
\hline
Diagonal gate & 0& $\Delta_n$ & $2^n -2$ & \cite{diagonal}  \\[1ex]
State preparation& 0& $\SP_v$ & $\frac{23}{24}2^n-2^{\frac{n}{2}+1}+5/3$ & \cite{plesch}, \cite[Remark 5]{isometries}\footnote{For convenience of presentation we have consolidated the bounds for even and odd $n$ into a single bound.}\\[1ex]
$k$-controlled single-qubit gate & 0& $C_{k}(U)$ & $16k^2 - 28k - 2 \textrm{ if } k \geq 2 \quad$  & \cite[Theorem~4]{isometries} \\[1ex]
$k$-controlled single-qubit gate & $k-1$ clean& $C_{k}(U)$ & $6k-4$ &  \cite[Corollary~4]{toffoli-ancilla} \\[1ex]
$k$-controlled \nt{} gate & 1 dirty& $C_{k}(X)$ & $16k-8$  & Corollary~\ref{cor:toffoli}, \cite{isometries} \\[1ex]
$k$-controlled \nt{} gate & $\ceil{\frac{k}{2}-1}$ dirty & $C_{k}(X)$ & $8k-12 \textrm{ if } k \geq 5$ & \cite[Proposition 5]{toffoli-ancilla} \\[1ex]
$k$-controlled \nt{} gate & $\ceil{\frac{k}{2}-1}$ clean& $C_{k}(X)$ & $6k-6 \textrm{ if } k \geq 2$ & \cite[Proposition 4]{toffoli-ancilla} \\[1ex]
Permutation gate & 0& $\Pi_n$ & $27n2^n$  & Appendix~\ref{app:perm}, \cite{reversible}\footnote{\label{ft}See Appendix~\ref{app:perm} for tighter bounds.} \\[1ex]
Permutation gate & 1 dirty& $\Pi_n$ & $(18n-26)2^n$ & Appendix~\ref{app:perm}${}^{\text{\ref{ft}}}$\\[1ex]
\hline
\end{tabular}
\caption{{\bf Useful \cnot{} counts (upper bounds) for some basic gates.} In each case $n$ denotes the number of qubits on which the gate acts and $k$ denotes the number of control qubits. An ancillary qubit is called clean if it starts in a known computational basis state and is restored to that state after the computation. It is called dirty if it starts in an unknown state which must be restored after the computation.}
\label{tab:counts}
\end{table*}

This pivoting algorithm allows us to find a scheme with low cost for the preparation of sparse states.

\begin{corollary} \label{cor:ssp}
  Let $\ket{v}$ be a state on $n$ qubits and let $\nnz(v)$ denote the number of non-zero entries of $\ket{v}$ in the computational basis. Let $s=\ceil{\log_2 \nnz(v)}$. The number of \cnots{} required for sparse state preparation of a state of $n$ qubits with $\nnz(v)$ non-zero elements is bounded by
  \begin{align*}
\cN_{\SSP}(n,s)
    &\leq \cN^{\Delta}_{\Piv}(n,s) + \cN_{\SP}(s),
  \end{align*}
where $\cN^{\Delta}_{\Piv}(n,s)$ denotes the number of \cnot{} gates used to implement pivoting up to a diagonal gate, i.e., to implement $\Delta\Piv$ for some diagonal gate $\Delta$. Explicit counts are given in Table~\ref{tab:explicit-counts}.
\end{corollary}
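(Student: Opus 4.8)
The plan is to reduce sparse state preparation to the two ingredients already in hand: the pivoting gate $\Piv_v$ from Lemma~\ref{lemma:pivot} and a dense state preparation routine $\SP$ whose count $\cN_{\SP}$ appears in Table~\ref{tab:counts}. By Lemma~\ref{lemma:pivot}, $\Piv_v\ket{v}=\ket{i}_{n-s}\ot\ket{\tilde v}_s$, so the permutation collapses the $\nnz(v)$ non-zero amplitudes into a single $s$-qubit block while disentangling the remaining $n-s$ qubits into a fixed computational basis state $\ket{i}_{n-s}$. The idea is then to run $\Piv_v$ in reverse. First I would prepare $\ket{\tilde v}_s$ on the $s$-qubit block using dense state preparation, costing $\cN_{\SP}(s)$ \cnots{}, while leaving the other $n-s$ qubits in the state $\ket{i}_{n-s}$ (which is just a computational basis state, obtainable by at most \nts{} that fold into later steps). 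Applying $\Piv_v^{-1}$ then yields $\Piv_v^{-1}\bigl(\ket{i}_{n-s}\ot\ket{\tilde v}_s\bigr)=\ket{v}$, giving $\SP_v=\Piv_v^{-1}(\text{prep }\ket{\tilde v})$.

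The one subtlety, which is exactly what the corollary statement flags, is that the bound involves $\cN^{\Delta}_{\Piv}(n,s)$, the cost of pivoting \emph{up to a diagonal gate}, rather than $\cN_{\Piv}(n,s)$ itself. I would exploit the fact that dense state preparation $\SP$ has a free choice of global phases on the $2^s$ amplitudes it produces, equivalently it can absorb an arbitrary diagonal gate $\Delta_s$ acting on the $s$-qubit block at no extra cost (this is the standard observation that state preparation is insensitive to a preceding diagonal, since $\SP_{\tilde v}$ can be replaced by $\SP_{\Delta^{-1}\tilde v}$ with the same \cnot{} count). Consequently, when inverting the pivot, any diagonal gate left over from implementing $\Piv_v$ approximately can be commuted through onto the block and swallowed by the $\SP$ step. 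This is why only the cheaper $\cN^{\Delta}_{\Piv}$ enters the count.

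Assembling the pieces, the total \cnot{} cost is the cost of preparing $\ket{\tilde v}_s$ on $s$ qubits, namely $\cN_{\SP}(s)$, plus the cost of the inverse pivot up to a diagonal, namely $\cN^{\Delta}_{\Piv}(n,s)$ (the inverse of a permutation has the same \cnot{} count as the permutation). Since these two stages act and are counted independently, summing them gives
\begin{align*}
\cN_{\SSP}(n,s)\leq \cN^{\Delta}_{\Piv}(n,s)+\cN_{\SP}(s),
\end{align*}
as claimed. The explicit entries in Table~\ref{tab:explicit-counts} then follow by substituting the concrete bounds for $\cN^{\Delta}_{\Piv}$ and the dense count $\frac{23}{24}2^s$ for $\cN_{\SP}(s)$ from Table~\ref{tab:counts}.

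I do not expect a serious obstacle here, since the corollary is essentially a composition of two established results. The only point requiring care is justifying rigorously that the leftover diagonal is genuinely free, i.e.\ that one can always choose the dense preparation of the $s$-qubit state to cancel it without increasing the \cnot{} count; this hinges on the phase freedom in $\SP$ and on the diagonal commuting cleanly past the disentangled register, and it is the step I would write out most explicitly.
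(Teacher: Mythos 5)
Your proposal is correct and follows essentially the same route as the paper: pivot up to a diagonal gate, then perform dense state preparation on the $s$-qubit block, absorbing the leftover diagonal into the (cost-independent) choice of which $s$-qubit state to prepare, yielding exactly $\SSP_v = (\Delta \Piv_v)^\dagger (I_{n-s}\ot\SP_{\tilde{v}})$. The paper merely presents it in the reverse direction (mapping $\ket{v}$ to $\ket{0}_n$ and inverting), which is the same argument.
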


\begin{proof}
It is sufficient to find a circuit that maps $\ket{v}$ to $\ket{0}_{n}$, since the inverse of this circuit implements state preparation for $\ket{v}$.  Let $\Piv_v$ be constructed as in Lemma~\ref{lemma:pivot}, then $\Delta\Piv_v \ket{v}$ has the form $\ket{i}_{n-s}\ot\ket{\tilde{v}}$ for any diagonal gate $\Delta$. Without loss of generality $i=0$. Now we merely need to apply reverse state preparation for $\ket{\tilde{v}}$ on $s$ qubits. Thus sparse state preparation can be implemented as
\begin{equation} \label{eq:ssp}
    \SSP_v = (\Delta \Piv_v)^\dagger (I_{n-s}\ot\SP_{\tilde{v}}),
\end{equation}
which gives the claimed count.
\end{proof}

\begin{remark}
  For use in our sparse state preparation decomposition, it is sufficient to decompose the $s$-controlled \nt{} gates of Lemma~\ref{lemma:pivot} up to a diagonal gate. For example the Toffoli gate (with two controls) requires six \cnots{} when implemented exactly~\cite{barenco}, but it can be implemented using only three \cnots{} when implemented up to a diagonal gate~\cite[Section~VI~B]{barenco}. We are not currently aware of schemes to decompose \nt{} gates with more controls up to diagonal, but, if these were found, our counts would be improved.
\end{remark}

\begin{remark}
A more straightforward way to implement sparse state preparation is to use two-level unitaries to eliminate the non-zero entries one by one, but this leads to higher counts than the method presented here. A similar method is used in \cite{barenco} to implement arbitrary unitaries.
\end{remark}

\section{\label{sec:Householder reflection}Generalized Householder Reflections}
Given a unit vector $\ket{v}$, the standard Householder reflection~\cite{householder} with respect to $\ket{v}$ is defined as
$$H_v=I-2\op{v}{v}\,.$$
We call $\ket{v}$ the Householder vector associated with the reflection. The generalized Householder reflection of phase $\phi$ with respect to $\ket{v}$ is defined as
$$H_v^\phi=I+(\e^{\ii\phi}-1)\op{v}{v}$$
and coincides with the standard definition if $\phi=\pi$. On certain architectures generalized Householder reflections can be implemented directly~\cite{HR1} and in a fault-tolerant way~\cite{HR2}. Standard Householder reflections can be approximated well using Clifford and T gates~\cite{vadym}. In the circuit model a state preparation scheme can be used to perform a generalized Householder reflection.

\begin{lemma} \label{lemma:sp}
Let $\SP_v$ denote a unitary implementing state preparation for the state \ket{v} and $H_0^\phi$ the Householder reflection with respect to $\ket{0}$. Then $H_v^\phi = \SP_v \cdot H_0^\phi \cdot \SP_v^\dagger$.
\end{lemma}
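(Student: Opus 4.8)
The plan is to verify the claimed identity $H_v^\phi = \SP_v \cdot H_0^\phi \cdot \SP_v^\dagger$ by direct algebraic manipulation, exploiting the fact that a Householder reflection is built entirely from the rank-one projector $\op{v}{v}$ and that conjugation by a unitary interacts cleanly with both the identity and such projectors. The key structural observation is that $\SP_v$ maps $\ket{0}$ to $\ket{v}$ by definition, so conjugation should turn a reflection about $\ket{0}$ into a reflection about $\ket{v}$.

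First I would recall the defining expression $H_0^\phi = I + (\e^{\ii\phi}-1)\op{0}{0}$, which is just the generalized Householder reflection applied to the Householder vector $\ket{0}$. Then I would substitute this into the right-hand side and distribute the conjugation across the sum, using that $\SP_v$ is unitary so that $\SP_v \cdot I \cdot \SP_v^\dagger = I$ and that the scalar $(\e^{\ii\phi}-1)$ passes through. This reduces the claim to showing
\begin{equation*}
  \SP_v \cdot \op{0}{0} \cdot \SP_v^\dagger = \op{v}{v}.
\end{equation*}
The final step is to verify this projector identity using the definition of state preparation. Writing $\SP_v \op{0}{0} \SP_v^\dagger = (\SP_v\ket{0})(\bra{0}\SP_v^\dagger) = (\SP_v\ket{0})(\SP_v\ket{0})^\dagger$, the definition $\SP_v\ket{0}=\ket{v}$ immediately gives $\op{v}{v}$, completing the argument.

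I do not anticipate a genuine obstacle here: the statement is essentially the observation that conjugating a rank-one projector by a unitary yields the projector onto the image of the conjugating map, combined with the bookkeeping that a Householder reflection is an affine function of that projector. The only point requiring a little care is making sure the scalar phase factor and the identity term are handled correctly under conjugation, and confirming that the conjugation is performed in the correct order so that $\SP_v$ (rather than $\SP_v^\dagger$) acts on $\ket{0}$ to produce $\ket{v}$; this is fixed by the placement of the daggers in the stated identity.
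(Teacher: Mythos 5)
Your proof is correct. It takes a slightly different route from the paper's one-line argument: the paper verifies the identity by checking the action of both sides on an orthonormal basis containing $\ket{v}$ (both sides send $\ket{v}\mapsto\e^{\ii\phi}\ket{v}$ and fix every vector orthogonal to $\ket{v}$), whereas you prove the operator identity directly, using linearity of conjugation to reduce the claim to $\SP_v\op{0}{0}\SP_v^\dagger=\op{v}{v}$, which is immediate from $\SP_v\ket{0}=\ket{v}$. The content is the same, but the emphasis differs: your computation never requires extending $\ket{v}$ to a basis and makes explicit that the whole lemma rests on the single projector identity, while the paper's basis argument foregrounds the geometric picture that a generalized reflection is determined by its eigenvalue $\e^{\ii\phi}$ on the reflection axis and by acting trivially on the orthogonal hyperplane. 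Both arguments are complete and rigorous, and your handling of the identity term and the scalar factor under conjugation is exactly right.
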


\begin{proof}
This can be seen by considering the action on an orthonormal basis containing $\ket{v}$.
\end{proof}

\begin{lemma}
\label{lemma:reflections}
The gate $H_0^\phi$ on $n$ qubits can be implemented using an $(n-1)$-controlled single-qubit gate. The special case $H_0$ can be implemented using the same number of \cnots{} and ancilla qubits as an $(n-1)$-controlled \nt{} gate. Explicit counts for these gates are given in Table~\ref{tab:counts}.
\end{lemma}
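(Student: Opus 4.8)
The plan is to observe that $H_0^\phi$ is nothing but a diagonal gate and then to recognize it as a multi-controlled single-qubit gate. Writing $\ket{0}=\ket{0}^{\ot n}$, the definition $H_0^\phi=\id+(\e^{\ii\phi}-1)\op{0}{0}$ shows that in the computational basis $H_0^\phi$ multiplies the all-zeros basis state by $\e^{\ii\phi}$ and fixes every other basis state, i.e.\ it is the diagonal gate with entries $(\e^{\ii\phi},1,\ldots,1)$.

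For the first claim I would single out one of the $n$ qubits as a target and use the remaining $n-1$ as controls, with the convention that the single-qubit gate acts when all control qubits are in $\ket{0}$. Applying to the target the single-qubit phase gate $P$ with $P\ket{0}=\e^{\ii\phi}\ket{0}$ and $P\ket{1}=\ket{1}$ then reproduces $H_0^\phi$ exactly: the state $\ket{0}^{\ot n}$ has all controls in $\ket{0}$ and target in $\ket{0}$, so it acquires the phase $\e^{\ii\phi}$, whereas any other basis state either has some control in $\ket{1}$ (and is untouched) or has the target in $\ket{1}$, on which $P$ acts trivially. Controlling on $\ket{0}$ rather than on $\ket{1}$ only amounts to conjugating the control qubits with $\nt$ gates, which are single-qubit gates, so this is an $(n-1)$-controlled single-qubit gate as claimed, and the explicit counts follow from the $C_{n-1}(U)$ entries of Table~\ref{tab:counts}.

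For the special case $H_0=H_0^\pi$, which is the diagonal gate with entries $(-1,1,\ldots,1)$, I would instead start from an $(n-1)$-controlled \nt{} gate $C_{n-1}(X)$ and use the fact that the Hadamard gate conjugates $X$ into $Z$ to convert it into $C_{n-1}(Z)$, the diagonal gate that applies the phase $-1$ precisely to $\ket{1}^{\ot n}$. Conjugating all $n$ qubits by $\nt$ then moves this $-1$ from $\ket{1}^{\ot n}$ to $\ket{0}^{\ot n}$, yielding exactly $H_0$. Every gate introduced in these two conjugations is a single-qubit gate, so neither the \cnot{} count nor the number of ancilla qubits changes relative to $C_{n-1}(X)$.

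The argument involves no genuinely hard step; the only thing requiring care is the bookkeeping, namely verifying that passing between control-on-$\ket{1}$ and control-on-$\ket{0}$ (via $\nt$) and between $X$ and $Z$ (via the Hadamard gate) introduces only single-qubit gates, so that the \cnot{} and ancilla counts are preserved and may be read off from the corresponding multi-controlled-gate rows of Table~\ref{tab:counts}.
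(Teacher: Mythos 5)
Your proposal is correct and takes essentially the same route as the paper: both identify $H_0^\phi$ as an $(n-1)$-controlled single-qubit phase gate, and both reduce $H_0$ to $C_{n-1}(X)$ by conjugation with Hadamard and \nt{} gates, which preserves the \cnot{} and ancilla counts. The only difference is bookkeeping---the paper conjugates the target by $XH$ via the identity $-Z = X(HXH)X$ (implicitly using controls on $\ket{0}$), whereas you conjugate the target by $H$ and all qubits by $X$, making the open-control conversion explicit---but these are equivalent single-qubit rearrangements.
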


\begin{proof}
The gate $H_0^\phi$ is a multi-controlled single-qubit gate with $n-1$ controls and hence the \cnot{} count is as in Table~\ref{tab:counts}. If $\phi = \pi$, then using some common gates
$$X =
\begin{bmatrix}
    0 & 1 \\
    1 & 0
\end{bmatrix}, \quad
Z =
\begin{bmatrix}
    1 & 0 \\
    0 & -1
\end{bmatrix}, \quad
H = \frac{1}{\sqrt{2}}
\begin{bmatrix}
    1 & 1 \\
    1 & -1
\end{bmatrix}$$
(note the Hadamard gate $H$ has a similar notation to the Householder reflection) and
$$ -Z = XZX = X(HXH)X $$
we obtain
$$H_0 = C_{n-1}(-Z) = XH \cdot C_{n-1}(X) \cdot HX\,.$$
The \cnot{} count is thus the one of $C_{n-1}(X)$ (see again Table~\ref{tab:counts}).
\end{proof}

Given two states $\ket{v}$ and $\ket{w}$ we can construct a gate that maps $\ket{v}$ to $\e^{\ii\theta}\ket{w}$ for some real $\theta$ using a standard Householder reflection defined as
\begin{equation} \label{eq:householder-vector}
H_{v,w} = H_u, \quad\text{where }
\ket{u} = \frac{\ket{v}- \e^{\ii\theta}\ket{w}}
{\|\ket{v}- \e^{\ii\theta}\ket{w}\|}
\end{equation}
with $\theta = \pi - \arg (\ip{v}{w})$ or $\theta = 0$ if $\ip{v}{w} = 0$.
We also define the generalized Householder reflection
$$\tilde{H}_{v,w} = H_u^\phi, \quad
\ket{u} = \frac{\ket{v}-\ket{w}}
{\|\ket{v}-\ket{w}\|}, \quad
\e^{\ii\phi} = \frac{\ip{v}{w}-1}{1-\overline{\ip{v}{w}}},$$
which has the property
$$\tilde{H}_{v,w}\ket{v}=\ket{w}.$$
The motivation for these definitions and related proofs are given in Appendix~\ref{A:vector}.

\section{\label{sec:hdec}Householder Decomposition}

Our goal is to find a decomposition of any isometry $V$ from $m$ to $n$ qubits. Let $I_{n,m}$ denote the first $2^m$ columns of the identity gate on $n$ qubits. If $G = G_k \cdots G_1 G_0$ is a product of elementary gates on $n$ qubits such that $GV=I_{n,m}$, then $G^\dagger$ is an extension of $V$ to a unitary. Thus $G^\dagger$ yields an implementation of $V$ using elementary gates.

Householder reflections provide a straightforward method for implementing arbitrary isometries. Let $\ket{v_0} = V\ket{0}$ be the first column of $V$ and consider $H_{v_0,0}$, the Householder reflection mapping $\ket{v_0}$ to $\ket{0}$ up to a phase. We can reduce the first column (and row by orthogonality) of $V$ by applying the Householder reflection to the isometry, i.e., the only entry in the first row and column of $H_{v_0,0}V$ is that corresponding to $\op{0}{0}$. Using the same idea the isometry can be reduced column by column to a diagonal isometry. Applying a diagonal gate on $m$ qubits then yields $I_{n,m}$. See Figure~\ref{fig:decomposition} for a schematic representation of the decomposition. 

\begin{figure*}[!t]
\centering
$$V\overset{H_{v_0,0}}{\longmapsto}
\begin{bmatrix}
    * & 0 & 0 & 0 \\
    0 & * & * & * \\
    0 & * & * & * \\
    0 & * & * & *
\end{bmatrix}
 \overset{H_{v_1,1}}{\longmapsto}
\begin{bmatrix}
    * & 0 & 0 & 0 \\
    0 & * & 0 & 0 \\
    0 & 0 & * & * \\
    0 & 0 & * & *
\end{bmatrix}
 \overset{H_{v_2,2}}{\longmapsto}
\begin{bmatrix}
    * & 0 & 0 & 0 \\
    0 & * & 0 & 0 \\
    0 & 0 & * & 0 \\
    0 & 0 & 0 & *
\end{bmatrix}
\overset{\Delta}{\longmapsto}
\begin{bmatrix}
    1 & 0 & 0 & 0 \\
    0 & 1 & 0 & 0 \\
    0 & 0 & 1 & 0 \\
    0 & 0 & 0 & 1
\end{bmatrix}$$
\caption{{\bf Basic idea of the Householder decomposition for dense isometries.} Here $*$ represents an arbitrary complex entry. Each step reduces one column without affecting the previous columns. The rows are reduced automatically due to the orthogonality of the columns. The final diagonal gate sets the phases on the diagonal equal to one.}
\label{fig:decomposition}
\end{figure*}

Before we describe the decomposition in more detail we show how the reduction of a column via Householder reflection affects the other columns of the isometry.

\begin{lemma} \label{lemma:reduction}
Let $V$ be an isometry. Let $\ket{v_j} = V\ket{j}$ be the $j^{\mathrm{th}}$ column of $V$ and $i$ be the target row index. The Householder reflection $H_{v_j,i}$ reduces the $j^{\mathrm{th}}$ column to the $i^{\mathrm{th}}$ row (i.e., such that its only non-zero element is in the $i^{\mathrm{th}}$ row).  For $s \neq i$ and $t \neq j$ we have
$$\bra{i}H_{v_j,i}V\ket{t} = \bra{s}H_{v_j,i}V\ket{j} = 0, \quad \bra{i}H_{v_j,i}V\ket{j} = \e^{\ii\theta}$$
and
$$\bra{s}H_{v_j,i}V\ket{t} = \melem{s}{V}{t} + \e^{-\ii\theta}\frac{ \melem{s}{V}{j}\melem{i}{V}{t}}{1 + |\melem{i}{V}{j}|}$$
where $\theta =  \pi + \arg (\melem{i}{V}{j})$ or $\theta = 0$ if $\melem{i}{V}{j} = 0$.
\end{lemma}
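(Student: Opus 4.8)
The plan is to unfold the definition of the Householder reflection from Eq.~\eqref{eq:householder-vector} and reduce everything to two facts about the isometry plus a single phase identity. Writing $c = \melem{i}{V}{j}$ for brevity, the reflection is $H_{v_j,i} = H_u = I - 2\op{u}{u}$ with Householder vector $\ket{u} = \tfrac{1}{N}\bigl(\ket{v_j} - \e^{\ii\theta}\ket{i}\bigr)$, where $N = \|\ket{v_j} - \e^{\ii\theta}\ket{i}\|$ and, reading $\ket{w}=\ket{i}$ so that $\ip{v_j}{i} = \bar c$, the angle $\theta = \pi - \arg\ip{v_j}{i} = \pi + \arg c$ matches the one in the statement. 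The one identity that drives all the simplifications is
$$\e^{-\ii\theta}\, c = -|c|, \qquad\text{equivalently}\qquad -c = |c|\,\e^{\ii\theta},$$
which holds for every $c$: for $c\neq 0$ it is immediate from $\theta = \pi + \arg c$, and for $c=0$ both sides vanish since then $\theta$ is set to $0$.

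First I would record the two consequences of $V$ being an isometry that I need: $\ip{v_a}{v_b} = \bra{a}V^\dagger V\ket{b} = \delta_{ab}$ for all columns, and in particular $\ip{v_j}{v_j}=1$. Together with $\ip{i}{i}=1$ and the phase identity these give the normalization
$$N^2 = 2 - 2\Re\bigl(\e^{-\ii\theta}c\bigr) = 2\bigl(1+|c|\bigr),$$
so that the prefactor appearing below simplifies to $2/N^2 = 1/(1+|c|)$.

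Next I would compute a single generic matrix element and then specialize. Using $\ip{u}{v_t} = \tfrac1N\bigl(\ip{v_j}{v_t} - \e^{-\ii\theta}\ip{i}{v_t}\bigr) = \tfrac1N\bigl(\delta_{jt} - \e^{-\ii\theta}\melem{i}{V}{t}\bigr)$ and $\ip{s}{u} = \tfrac1N\bigl(\melem{s}{V}{j} - \e^{\ii\theta}\delta_{si}\bigr)$ yields
$$\bra{s}H_{v_j,i}V\ket{t} = \melem{s}{V}{t} - \frac{1}{1+|c|}\bigl(\melem{s}{V}{j} - \e^{\ii\theta}\delta_{si}\bigr)\bigl(\delta_{jt} - \e^{-\ii\theta}\melem{i}{V}{t}\bigr).$$
The four claims then follow by substituting the appropriate $\delta$'s and repeatedly invoking $\e^{-\ii\theta}c = -|c|$: for $t=j,\,s\neq i$ the bracket collapses to $\melem{s}{V}{j}(1+|c|)$ and cancels the first term (the column is cleared away from row $i$); for $s=i,\,t\neq j$ it collapses to $\melem{i}{V}{t}(1+|c|)$ (the row is cleared away from column $j$); for $s=i,\,t=j$ a short simplification using $-c=|c|\e^{\ii\theta}$ gives $\bra{i}H_{v_j,i}V\ket{j} = \e^{\ii\theta}$; and for $s\neq i,\,t\neq j$ only the cross term $-\e^{-\ii\theta}\melem{s}{V}{j}\melem{i}{V}{t}$ survives, producing exactly the fill-in formula in the statement.

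I expect the only real obstacle to be the phase bookkeeping rather than any conceptual difficulty. One has to check that the sign convention $\theta = \pi + \arg\melem{i}{V}{j}$ is consistent with Eq.~\eqref{eq:householder-vector} after accounting for the conjugation in $\ip{v_j}{i}=\bar c$, and to treat the degenerate case $c=0$, where $\arg$ is undefined and $\theta$ is fixed to $0$, so that the $s=i,t=j$ computation must be carried out by direct substitution rather than by dividing through by $|c|$. Once the identity $\e^{-\ii\theta}c=-|c|$ is in hand, every case reduces to routine algebra.
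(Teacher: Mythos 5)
Your proposal is correct and takes essentially the same route as the paper: expand $H_{v_j,i}=I-2\op{u}{u}$ with normalization $2(1+|\melem{i}{V}{j}|)$, use orthogonality of the columns of $V$, and handle the phases via $\e^{-\ii\theta}\melem{i}{V}{j}=-|\melem{i}{V}{j}|$. The only difference is presentational: the paper cites the construction of $H_{v,w}$ for the column/row clearing and computes only the $s\neq i$, $t\neq j$ case explicitly, whereas you derive all four cases from one generic matrix element, which is a bit more self-contained but the same computation.
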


\begin{proof}
By construction we have $H_{v_j,i}V\ket{j} = \e^{\ii\theta}\ket{i}$ and by orthogonality of the columns $\bra{i}H_{v_j,i}V\ket{t} = 0$. For the final case given in the statement, we compute
\begin{align*}
\bra{s}&H_{v_j,i}V\ket{t} \\
&= \bra{s} \left(I - 2\frac{(\ket{v_j} - \e^{\ii\theta}\ket{i})(\bra{v_j} - \e^{-\ii\theta}\bra{i})}
{2 (1 + |\ip{v_j}{i}|)}\right) V \ket{t} \\
&= \melem{s}{V}{t} - \bra{s} \frac{\ket{v_j} (-\e^{-\ii\theta})\bra{i}}
{1 + |\melem{i}{V}{j}|} V \ket{t} \\
&= \melem{s}{V}{t} + \e^{-\ii\theta}\frac{ \melem{s}{V}{j}\melem{i}{V}{t}}{1 + |\melem{i}{V}{j}|}.&\hspace{-2em}\qedhere
\end{align*}
\end{proof}

\begin{corollary} \label{corr:fill-in}
Let $s \neq i$ and $t \neq j$, then $\bra{s}H_{v_j,i}V\ket{t} \neq \melem{s}{V}{t}$ if and only if $\melem{i}{V}{t}$ and $\melem{s}{V}{j}$ are non-zero.
\end{corollary}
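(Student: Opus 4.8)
The plan is to read the result off directly from the formula in Lemma~\ref{lemma:reduction}. Since we are in the regime $s \neq i$ and $t \neq j$, that lemma gives the explicit expression
\begin{equation*}
\bra{s}H_{v_j,i}V\ket{t} = \melem{s}{V}{t} + \e^{-\ii\theta}\frac{ \melem{s}{V}{j}\melem{i}{V}{t}}{1 + |\melem{i}{V}{j}|},
\end{equation*}
so the quantity $\bra{s}H_{v_j,i}V\ket{t} - \melem{s}{V}{t}$ equals the single correction term on the right. The whole claim thus reduces to determining exactly when this correction term vanishes.

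For that I would isolate the scalar prefactor $\e^{-\ii\theta}/(1 + |\melem{i}{V}{j}|)$ and observe that it is always nonzero: the phase $\e^{-\ii\theta}$ has unit modulus, and the denominator $1 + |\melem{i}{V}{j}|$ is at least $1$ and hence strictly positive (this is also what guarantees the expression in Lemma~\ref{lemma:reduction} is well defined). Consequently the correction term is nonzero precisely when its remaining factor $\melem{s}{V}{j}\,\melem{i}{V}{t}$ is nonzero, and a product of two complex numbers is nonzero exactly when both factors are nonzero. This yields $\bra{s}H_{v_j,i}V\ket{t} \neq \melem{s}{V}{t}$ if and only if both $\melem{s}{V}{j}$ and $\melem{i}{V}{t}$ are nonzero, as claimed.

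There is no genuine obstacle here; the statement is an immediate corollary of Lemma~\ref{lemma:reduction}. The only point that requires a (trivial) check is that the prefactor cannot accidentally annihilate the correction, which is settled by noting that the denominator is bounded below by $1$ and the phase factor never vanishes.
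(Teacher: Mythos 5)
Your proof is correct and matches the paper's approach: the paper states Corollary~\ref{corr:fill-in} without a separate proof, treating it exactly as you do---as an immediate consequence of the formula in Lemma~\ref{lemma:reduction}, where the correction term vanishes precisely when $\melem{s}{V}{j}\melem{i}{V}{t}=0$ since the prefactor $\e^{-\ii\theta}/(1+|\melem{i}{V}{j}|)$ never vanishes.
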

In other words if the $j^{\text{th}}$ column is reduced to the $i^{\text{th}}$ row then the entry of $V$ at position $s,t$ does not change if $\melem{i}{V}{t}=0$ or $\melem{s}{V}{j}=0$.

\subsection{Dense isometries}
First we consider the Householder decomposition for dense isometries. This decomposition works for any isometry, but does not take advantage of any sparseness. The idea is to reduce the columns one by one. It follows from Corollary~\ref{corr:fill-in} that previously reduced columns are not affected by subsequent Householder reflections. More precisely define $V_0 = V$ and iteratively
$$\ket{v_i} = V_i\ket{i}, \quad G_i = H_{v_i, i}, \quad V_{i+1} = G_i V_i,$$
for $i=0,\dots,2^m-1$. Then $G_{2^m-1} \cdots G_0 V = I_{n,m} \Delta_m$ where $\Delta_m$ denotes a diagonal gate on $m$ qubits. This proves the following lemma.

\begin{lemma}
Let $V$ be an isometry from $m$ to $n$ qubits. Then $V$ can be implemented using $2^m$ Householder reflections on $n$ qubits and a diagonal gate on $m$ qubits.
\end{lemma}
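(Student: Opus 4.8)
The plan is to make the iterative construction described just before the statement precise and verify by induction that it terminates in a diagonal isometry. I would set $V_0 = V$ and, for $i = 0, 1, \dots, 2^m-1$, define $\ket{v_i} = V_i\ket{i}$, $G_i = H_{v_i,i}$ and $V_{i+1} = G_i V_i$, exactly as in the text. The object to establish is the invariant that after $i$ steps the first $i$ columns of $V_i$ are $\e^{\ii\theta_k}\ket{k}$ for $k = 0, \dots, i-1$ and some phases $\theta_k$. Granting the invariant at $i = 2^m$ gives $G_{2^m-1}\cdots G_0 V = I_{n,m}\Delta_m$ with $\Delta_m = \mathrm{diag}(\e^{\ii\theta_0}, \dots, \e^{\ii\theta_{2^m-1}})$ a diagonal gate on $m$ qubits. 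Since each standard Householder reflection is Hermitian and unitary, so $G_i^\dagger = G_i$, rearranging yields $V = G_0 \cdots G_{2^m-1}\, I_{n,m}\Delta_m$, a product of $2^m$ Householder reflections on $n$ qubits together with a single diagonal gate on $m$ qubits, which is precisely the claimed count.

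The heart of the argument is the induction step for the invariant, which I would split into two parts. First, that $G_i$ reduces column $i$ correctly is immediate from Lemma~\ref{lemma:reduction}, which gives $\bra{i}G_i V_i\ket{i} = \e^{\ii\theta_i}$ and $\bra{s}G_i V_i\ket{i} = 0$ for $s \neq i$, so the $i$-th column of $V_{i+1}$ is $\e^{\ii\theta_i}\ket{i}$. Second, and this is the step needing the most care, that the already reduced columns $0, \dots, i-1$ are left undisturbed by $G_i$. Here I would invoke Corollary~\ref{corr:fill-in}: for a column index $t = k < i$, the induction hypothesis gives that column $k$ of $V_i$ equals $\e^{\ii\theta_k}\ket{k}$, so in particular $\melem{i}{V_i}{k} = 0$ because $k \neq i$. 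The corollary then forces $\bra{s}G_i V_i\ket{k} = \melem{s}{V_i}{k}$ for every $s \neq i$, so the off-row entries of column $k$ are unchanged, while the $i$-th row entry stays zero by the orthogonality statement of Lemma~\ref{lemma:reduction}. Hence column $k$ remains $\e^{\ii\theta_k}\ket{k}$, closing the step; the base case $i = 0$ is vacuous.

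I expect the main (indeed essentially the only) obstacle to be making this second part watertight: one must argue that no fill-in spoils the part of the matrix already brought to diagonal form, and this is exactly what Corollary~\ref{corr:fill-in} is designed to control, so the work reduces to checking its hypothesis on the previously reduced columns. The remaining details are routine, the final identity being just a rearrangement using Hermiticity of the reflections. I would also note in passing that $V_i$ stays an isometry at every step, since each $G_i$ is unitary, which makes the emerging block structure -- diagonal top-left corner with zeros in the rest of the first $i$ rows and columns -- transparent and confirms that the bottom-right block to which the next reflection is applied is well defined.
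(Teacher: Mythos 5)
Your proposal is correct and follows essentially the same route as the paper: the paper's proof is exactly the iteration $V_0=V$, $\ket{v_i}=V_i\ket{i}$, $G_i=H_{v_i,i}$, $V_{i+1}=G_iV_i$, with Corollary~\ref{corr:fill-in} invoked to guarantee that previously reduced columns are untouched, yielding $G_{2^m-1}\cdots G_0V=I_{n,m}\Delta_m$. You merely spell out the induction invariant and the Hermiticity rearrangement that the paper leaves implicit, which is a faithful (and slightly more detailed) rendering of the same argument.
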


In any sequence of Householder reflections implemented using the construction in Lemma~\ref{lemma:sp}, gates of the form $\SP_v^\dagger\cdot\SP_w$ occur. Using the dense state preparation scheme from~\cite{plesch} one can merge some gates as described in~\cite[Theorem~1]{isometries} for Knill's decomposition~\cite{knill}. This essentially halves the \cnot{} count. The structural similarity between the Householder decomposition and Knill's decomposition suggests a generalized decomposition, which we present in Appendix~\ref{app:knill-householder}. The proofs of the following two results are given in Appendix~\ref{app:dense-householder}.

\begin{lemma} \label{lemma:dense-isometries}
Let $V$ be an isometry from $m$ to $n$ qubits with $n \geq 5$. Then $V$ can be implemented via standard Householder reflections using one dirty ancilla qubit and $\mathcal{N}_{\mathrm{iso}}(m,n)$ \cnot{} gates where
\begin{eqnarray*}
\mathcal{N}_{\mathrm{iso}}(m,n)&{} \leq {}& \tfrac{23}{24}(2^{m+n}+2^n)+\tfrac{23}{12}2^{m+\frac{n}{2}}-2^{m+\frac{n}{4}+2}\\&&\quad+(16n-23)2^m-2/3 \text{\ \ if $n$ is even},\\
\mathcal{N}_{\mathrm{iso}}(m,n)&{} \leq {}& \tfrac{115}{96}(2^{m+n}+2^n)+\tfrac{23}{12}2^{m+\frac{n-1}{2}}-2^{m+\frac{n-1}{4}+2}\\&&\quad+(16n-23)2^m-2/3 \text{\ \ if $n$ is odd}.
\end{eqnarray*}
\end{lemma}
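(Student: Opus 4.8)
The plan is to start from the preceding lemma, which writes $V$ as a product of $2^m$ standard Householder reflections together with a diagonal gate, to realise each reflection through state preparation via Lemma~\ref{lemma:sp}, and then to exploit a cancellation between adjacent reflections that halves the dominant cost. First I would take the reduction $G_{2^m-1}\cdots G_0\, V = I_{n,m}\Delta_m$ with $G_i = H_{v_i,i}$, and invert it using that a standard Householder reflection is its own inverse, giving $V = G_0\cdots G_{2^m-1}\, I_{n,m}\Delta_m$. Substituting $H_{v_i,i} = \SP_{v_i} H_0 \SP_{v_i}^\dagger$ from Lemma~\ref{lemma:sp} and telescoping the product yields
$$\SP_{v_0} H_0 \big(\SP_{v_0}^\dagger \SP_{v_1}\big) H_0 \cdots \big(\SP_{v_{2^m-2}}^\dagger \SP_{v_{2^m-1}}\big) H_0\, \SP_{v_{2^m-1}}^\dagger,$$
which exposes two boundary state preparations, $2^m-1$ interior products of the form $\SP_w^\dagger\SP_v$, and $2^m$ copies of the reflection $H_0$. (That this telescoped circuit really implements $V$ rests on Corollary~\ref{corr:fill-in}: reducing one column leaves the already-reduced columns untouched.)

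Next I would bound each ingredient using Table~\ref{tab:counts}, where $n\geq 5$ guarantees the cited counts are valid. By Lemma~\ref{lemma:reflections} each $H_0$ equals an $(n-1)$-controlled \nt{} up to single-qubit gates, so with the single dirty ancilla it costs $16(n-1)-8$ \cnots{}; the $2^m$ reflections together with the trailing diagonal gate $\Delta_m$ (costing $2^m-2$ \cnots{}) produce the $(16n-23)2^m$ contribution. The decisive step is the $2^m-1$ interior products $\SP_w^\dagger\SP_v$: rather than paying $2\cN_{\SP}(n)$ for each, I would merge them following the argument of \cite[Theorem~1]{isometries} for Knill's decomposition~\cite{knill}, so that each costs essentially one dense state preparation. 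Counting the two boundary preparations together with the merged interior products then contributes $(2^m+1)\cN_{\SP}(n)$ at leading order, whose leading term $\tfrac{23}{24}(2^{m+n}+2^n)$ already reproduces the claimed bound. The even/odd split in the statement is inherited directly from the parity dependence of $\cN_{\SP}(n)$ in Table~\ref{tab:counts}, which in turn reflects that Plesch's scheme~\cite{plesch} splits the $n$ output qubits into two near-equal halves.

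The hard part will be making the merging quantitative rather than merely leading-order. I would need to open up Plesch's recursive state-preparation circuit and identify precisely which sub-blocks of $\SP_w^\dagger$ and of $\SP_v$ cancel or combine where they meet, in order to pin down the subleading corrections $\tfrac{23}{12}2^{m+n/2}$ and $-2^{m+n/4+2}$ (and the final constant $-2/3$) rather than just the leading $\tfrac{23}{24}2^{m+n}$. I expect this bookkeeping to be the only genuine obstacle; I would organise it through the generalized Knill--Householder decomposition of Appendix~\ref{app:knill-householder} and defer the full accounting to Appendix~\ref{app:dense-householder}.
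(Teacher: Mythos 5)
Your proposal takes essentially the same route as the paper's proof: the paper likewise writes $V$ as $2^m$ state-preparation-sandwiched $H_0$ blocks whose adjacent preparations are merged exactly as in Theorem~1 of~\cite{isometries}, with each standard $H_0$ costing $16n-24$ \cnots{} via one dirty ancilla and a final diagonal correction on $m$ qubits costing $2^m-2$. The quantitative merging bookkeeping you defer is also deferred in the paper, whose proof simply states that it "follows analogously" to~\cite[Theorem~1]{isometries} and records only these modifications.
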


Note that, to leading order, the counts match those of~\cite[Theorem~1]{isometries} and hence, in the case of dense isometries, the advantage of using the Householder decomposition rather than Knill's is only apparent for small cases. 

\begin{corollary} \label{coro:dense-unitaries}
Let $U$ be a unitary on $n$ qubits with $n\geq5$. Then $U$ can be implemented via standard Householder reflections using one dirty ancilla qubit and $\mathcal{N}_{U}(n)$ \cnot{} gates where
\begin{eqnarray*}
\mathcal{N}_{U}(n) &{} \leq {}& \tfrac{161}{240} 4^n +  \mathcal{O}\left(2^{3n/2}\right)\text{\ \ if $n$ is even},\\
\mathcal{N}_{U}(n) &{} \leq {}& \tfrac{23}{30} 4^n +  \mathcal{O}\left(2^{3n/2}\right)\text{\ \ if $n$ is odd}.
\end{eqnarray*}
\end{corollary}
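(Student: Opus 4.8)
The plan is to view $U$ as an isometry with $m=n$ but exploit the orthogonality of its columns recursively, since applying Lemma~\ref{lemma:dense-isometries} directly with $m=n$ only yields the weaker leading coefficient $\tfrac{23}{24}$. First I would split $U$ by columns: its first $2^{n-1}$ columns form an isometry from $n-1$ to $n$ qubits, which I reduce to $I_{n,n-1}$ by the dense Householder decomposition of Lemma~\ref{lemma:dense-isometries}, using $\mathcal{N}_{\mathrm{iso}}(n-1,n)$ \cnots{} and one dirty ancilla. By the orthogonality of the columns (Lemma~\ref{lemma:reduction} and Corollary~\ref{corr:fill-in}), the reduced matrix then has the block form $\mathrm{diag}(I_{2^{n-1}},U')$, where $U'$ is a unitary on $n-1$ qubits supported in the subspace with first qubit $\ket1$.

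The crucial step — and the one I expect to be the main obstacle — is to show that reducing the remaining $2^{n-1}$ columns costs only $\mathcal{N}_U(n-1)$ to leading order, even though $\mathrm{diag}(I,U')$ is a \emph{controlled} unitary rather than $I\ot U'$. Each Householder vector $\ket{u_j}$ used for $j\ge 2^{n-1}$ is supported in the first-qubit-$\ket1$ subspace, so $H_{u_j}=\mathrm{diag}(I,H_{u_j'})$ fixes the already-reduced columns and acts as an $(n-1)$-qubit reflection $H_{u_j'}$ in the controlled sector. Writing $\ket{u_j}=\ket1\ot\ket{u_j'}$ and using Lemma~\ref{lemma:sp}, one checks
$$\mathrm{diag}(I,H_{u_j'})=(I\ot\SP_{u_j'})\,C_1(H_0^\phi)\,(I\ot\SP_{u_j'}^\dagger),$$
where $C_1$ places a control on the first qubit; since the $(n-1)$-qubit $H_0^\phi$ is an $(n-2)$-controlled single-qubit gate (Lemma~\ref{lemma:reflections}), this extra control merely turns it into the $n$-qubit $H_0^\phi$. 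Thus reducing the second half is exactly the Householder decomposition of $U'$ on $n-1$ qubits, the only overhead being one added control per reflection, i.e.\ $\mathcal{O}(n)$ \cnots{} per column and $\mathcal{O}(n2^n)$ overall. Reusing the single dirty ancilla at each level, this gives the recursion $\mathcal{N}_U(n)\le \mathcal{N}_{\mathrm{iso}}(n-1,n)+\mathcal{N}_U(n-1)+\mathcal{O}(n2^n)$.

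Finally I would unroll the recursion down to a constant base case (permissible once $n\ge5$, as Lemma~\ref{lemma:dense-isometries} requires), so that to leading order $\mathcal{N}_U(n)\le\sum_{k\le n}\mathcal{N}_{\mathrm{iso}}(k-1,k)$. Substituting $m=k-1$ into Lemma~\ref{lemma:dense-isometries} gives $\mathcal{N}_{\mathrm{iso}}(k-1,k)=\tfrac{23}{48}4^k+\mathcal{O}(2^{3k/2})$ for even $k$ and $\tfrac{115}{192}4^k+\mathcal{O}(2^{3k/2})$ for odd $k$. I would then split the sum by the parity of $k$: consecutive same-parity terms differ by a factor $16$, so each part is a geometric series summing with the factor $\tfrac{16}{15}$. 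For $n$ even the even- and odd-indexed contributions are $\tfrac{23}{45}4^n$ and $\tfrac{23}{144}4^n$, which add to $\tfrac{161}{240}4^n$; for $n$ odd they are $\tfrac{23}{36}4^n$ and $\tfrac{23}{180}4^n$, adding to $\tfrac{23}{30}4^n$. The discarded pieces — the $\mathcal{O}(2^{3k/2})$ isometry remainders, the $\mathcal{O}(n2^n)$ control overheads, the constant base case, and the lower-order diagonal gate fixing the final phases — all sum to $\mathcal{O}(2^{3n/2})$, matching the claimed error term and giving the two stated bounds.
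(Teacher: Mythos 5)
Your proposal is correct and takes essentially the same route as the paper's proof: the paper likewise reduces the first $2^{n-1}$ columns via Lemma~\ref{lemma:dense-isometries}, iterates on the resulting controlled unitary $\ket{0}\!\!\bra{0}\ot I+\ket{1}\!\!\bra{1}\ot U_1$ using the key fact that the state-preparation gates in Eq.~\eqref{eq:dense} need no controls (only the central $H_0$ reflections and the diagonal do), and evaluates the same parity-split geometric series, with your coefficients $\tfrac{23}{45}+\tfrac{23}{144}=\tfrac{161}{240}$ and $\tfrac{23}{36}+\tfrac{23}{180}=\tfrac{23}{30}$ matching the paper's $\tfrac{23}{48}\cdot\tfrac{16}{15}$, $\tfrac{115}{768}\cdot\tfrac{16}{15}$, etc. The only difference is presentational: you phrase the construction as a recursion $\mathcal{N}_U(n)\le\mathcal{N}_{\mathrm{iso}}(n-1,n)+\mathcal{N}_U(n-1)+\mathcal{O}(n2^n)$ and unroll it, whereas the paper writes the unrolled sum $\sum_{k=0}^{n-1}N(n-k-1,n-k)+k2^{n-k+3}+2^{n-1}(1-2^{-k})$ directly, treating an isometry with $k$ controls at step $k$.
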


This result improves on the \cnot{} count for a similar decomposition for dense unitaries based on Householder reflections given in~\cite{sparsedays} which achieves $4^n$ to leading order. Note however that in the case of dense unitaries there are decompositions that achive better counts \cite{diagonal}.

\subsection{Sparse isometries} \label{sec:sparse-householder}

Now we consider the Householder decomposition for sparse isometries. Again the decomposition works for any isometry, but now the number of \cnot{} gates depends on the number of non-zero entries and their structure. The method yields lower \cnot{} counts than the decomposition for dense isometries if the isometry is sufficiently sparse. We do not specify a precise number of zeros an isometry should have in order to be considered sparse, but use $W$ to denote isometries in the context of methods designed to make use of sparseness.

\subsubsection{Decomposition of sparse isometries}

The basic idea of the sparse Householder decomposition is that if the columns of $W$ are sparse, then so are the Householder vectors used in the decomposition. Therefore we can use our sparse state preparation method (Corollary~\ref{cor:ssp}) to save \cnots. The main obstacle is fill-in generated by the Householder reflections, i.e., where zero entries of the isometry are removed by the reflection. Corollary~\ref{corr:fill-in} implies that such fill in is relatively small. It can be further reduced by decomposing the columns of the isometry in a well-chosen order.

More precisely let $\rho$ be a permutation of the rows of $W$ and let $\sigma$ be a permutation of its columns. We call the pair $(\rho, \sigma)$ an elimination strategy. Then the sparse Householder decomposition proceeds as in the dense case, except that at step $i$ we reduce column $\sigma(i)$ to row $\rho(i)$. If we implement the Householder reflections up to a permutation and a diagonal gate, then, after all reductions, the isometry will be a row permutation of a diagonal isometry. More precisely define $W_0 = W$ and iteratively
$$\ket{w_i} = W_i\ket{\sigma(i)}, \quad G_i = \Delta\Pi^i H_{w_i, \tilde{\rho}(i)},\quad W_{i+1} = G_i W_i$$
for $i = 0, \ldots, 2^m-1$ and where $\Pi^i$ is a permutation gate and $\tilde{\rho}(i) = (\Pi^{i-1}\cdots  \Pi^1\Pi^0 \circ \rho)(i)$ with $\circ$ denoting the action of a permutation gate on a permutation. Then $G_{2^m-1} \cdots G_0 W = \Pi_n I_{n,m} \Delta_m$. The following two lemmas show how to decompose permuted diagonal gates and how to implement Householder reflections up to a permutation and a diagonal gate.

\begin{lemma} \label{lemma:perm-diag}
Let $\Pi_n$ be a permutation gate on $n$ qubits and $\Delta_m$ a diagonal gate on $m$ qubits. An isometry of the form $\Pi_n I_{n,m} \Delta_m$ can be implemented using
\begin{align*}
\cN_{\Pi I \Delta}(n,m)
\leq \cN_{\Piv}^\Delta(n,m) + \cN_{\Pi}^\Delta(m) + \cN_\Delta(m)
\end{align*}
\cnots{}. Explicit counts are given in Table~\ref{tab:explicit-counts}.
\end{lemma}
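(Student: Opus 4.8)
The plan is to reduce the isometry $\Pi_n I_{n,m}\Delta_m$ to a permuted diagonal gate acting only on the $m$ input qubits, by first pivoting the supports of its columns into the block spanned by the first $2^m$ basis states, and then absorbing every auxiliary diagonal gate produced along the way into a single diagonal gate on $m$ qubits.

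First I would observe that the columns of $\Pi_n I_{n,m}$ are the $2^m$ distinct computational basis states $\ket{\pi(0)}_n,\dots,\ket{\pi(2^m-1)}_n$, where $\pi$ is the permutation realized by $\Pi_n$. Treating the set of occupied rows as the support of a fictitious $n$-qubit state with exactly $2^m$ non-zero entries (so that $s=m$ in the notation of Lemma~\ref{lemma:pivot}), the pivoting algorithm produces a permutation gate $\Piv$ that moves every occupied row into the target block. Since $\Piv$ is a bijection and the columns of $\Pi_n I_{n,m}$ are distinct, the induced action on the last $m$ qubits is itself a permutation $\tau$, so that $\Piv\,\Pi_n I_{n,m} = I_{n,m}\,\Pi_\tau$ for a permutation gate $\Pi_\tau$ on $m$ qubits. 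Equivalently $\Pi_n I_{n,m}\Delta_m = \Piv^\dagger\, I_{n,m}\,(\Pi_\tau \Delta_m)$, where $\Piv$ here denotes the exact permutation and the cost of realizing it only up to a diagonal is tracked separately below.

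I would then read this identity right to left as a circuit: implement the $m$-qubit permuted diagonal $\Pi_\tau \Delta_m$ (the permutation up to a diagonal at cost $\cN_\Pi^\Delta(m)$, followed by a single diagonal gate at cost $\cN_\Delta(m)$), realize the embedding $I_{n,m}$ for free by adjoining $n-m$ fresh $\ket{0}$ qubits, and finally apply $\Piv^\dagger$ up to a diagonal at cost $\cN_\Piv^\Delta(n,m)$. The ancilla requirements are inherited directly from the pivoting step, which explains the ancilla columns of Table~\ref{tab:explicit-counts}.

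The only delicate point is to verify that every diagonal gate left undetermined by the ``up to a diagonal'' implementations can be folded into that single $m$-qubit diagonal gate. The key observation is that any diagonal $D$ on the $n$ output qubits only rephases the $2^m$ occupied basis states; conjugating $D$ through the permutation $\Piv^\dagger$ keeps it diagonal, and restricting it to the target block yields a diagonal acting on the last $m$ qubits. Because both this residual diagonal and the diagonal left over from implementing $\Pi_\tau$ up to a diagonal act on the same $m$ qubits and merely rephase the columns, I can choose the diagonal part of the $m$-qubit permuted diagonal to cancel them and reproduce exactly the phases prescribed by $\Delta_m$. Summing the three contributions then gives the claimed bound. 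I expect this diagonal bookkeeping to be the main obstacle; the geometric content, namely that pivoting collapses the problem to $m$ qubits, is immediate from Lemma~\ref{lemma:pivot}.
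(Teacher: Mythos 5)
Your proposal is correct and takes essentially the same route as the paper: the paper pivots the vector formed by summing the columns of $\Pi_n I_{n,m}\Delta_m$ (which has the same support as your ``fictitious state'' of occupied rows, with $s=m$), thereby reducing the isometry to the form $I_{n,m}\Pi_m\Delta_m$, and then implements the $m$-qubit permutation up to diagonal plus a diagonal gate on $m$ qubits. Your explicit verification that the leftover diagonals conjugate through the pivoting permutation and restrict to $m$-qubit diagonals that can be absorbed is a detail the paper leaves implicit (``it suffices to perform the pivoting up to a diagonal''), but it is the same argument.
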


\begin{proof}
  Consider the vector $\ket{u}$ formed by summing the columns of $\Pi_n I_{n,m} \Delta_m$ and the pivoting algorithm (Algorithm~\ref{alg:piv}) applied to this. If the gates required to pivot $\ket{u}$ are applied to $\Pi_n I_{n,m} \Delta_m$, all the $2^m$ non-zero entries of $\Pi_n I_{n,m} \Delta_m$ are moved to the top.  [Note that when taking the counts from Remark~\ref{rmk:pivcount}, we replace $\nnz(v)$ by $2^m$.]  The resulting isometry has the form $I_{n,m} \Pi_m \Delta_m$. Since it leads to the same structure, it suffices to perform the pivoting up to a diagonal. It remains to implement a permutation on $m$ qubits (up to diagonal) and a diagonal gate on $m$ qubits.
\end{proof}

\begin{lemma}
\label{lemma:hr-up-to}
Let $\ket{v}$ be a state on $n$ qubits and let $\nnz(v)$ denote the number of non-zero entries of $\ket{v}$ in the computational basis. Let $s=\ceil{\log_2 \nnz(v)}$. Then the standard Householder reflection $H_v$ can be implemented up to diagonal and permutation gates using
\begin{align*}
\cN_H^{\Delta\Pi}(n,s)
&\leq 2 \cN_{\SP}(s) + \cN_{\Piv}^\Delta(n,s) + \cN_{H_0}(n)
\end{align*}
\cnots{}. Explicit counts are given in Table~\ref{tab:explicit-counts}.
\end{lemma}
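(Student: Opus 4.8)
The plan is to combine Lemma~\ref{lemma:sp}, which writes any Householder reflection as a conjugation of $H_0$ by a state preparation unitary, with the sparse state preparation decomposition from Corollary~\ref{cor:ssp}. By Lemma~\ref{lemma:sp} we have $H_v = \SP_v \cdot H_0 \cdot \SP_v^\dagger$ for \emph{any} $\SP_v$ implementing state preparation for $\ket{v}$. For $\SP_v$ I would take the sparse decomposition built in the proof of Corollary~\ref{cor:ssp}, namely $\SP_v = (\Delta\Piv_v)^\dagger (I_{n-s}\ot\SP_{\tilde v})$, where $\Delta\Piv_v$ is the pivoting gate (implemented only up to a diagonal $\Delta$) that concentrates $\ket{v}$ onto $s$ qubits via $\Delta\Piv_v\ket{v} = \ket{0}_{n-s}\ot\ket{\tilde v}$, and $\SP_{\tilde v}$ prepares the resulting $s$-qubit state $\ket{\tilde v}$.

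Substituting this $\SP_v$ gives
$$H_v = (\Delta\Piv_v)^\dagger (I_{n-s}\ot\SP_{\tilde v})\, H_0\, (I_{n-s}\ot\SP_{\tilde v})^\dagger (\Delta\Piv_v).$$
The central observation is that we are only required to implement $H_v$ up to a diagonal and a permutation gate. Pre-multiplying by the permitted factor $\Delta\Piv_v$ cancels the leftmost $(\Delta\Piv_v)^\dagger$, leaving
$$(\Delta\Piv_v)\, H_v = (I_{n-s}\ot\SP_{\tilde v})\, H_0\, (I_{n-s}\ot\SP_{\tilde v})^\dagger (\Delta\Piv_v).$$
Reading the right-hand side as a circuit (acting right to left) we apply: one pivoting up to a diagonal, at cost $\cN_{\Piv}^\Delta(n,s)$; one reverse $s$-qubit state preparation, at cost $\cN_{\SP}(s)$; the reflection $H_0$ on $n$ qubits, at cost $\cN_{H_0}(n)$; and one $s$-qubit state preparation, at cost $\cN_{\SP}(s)$. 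Summing yields the claimed bound $2\cN_{\SP}(s)+\cN_{\Piv}^\Delta(n,s)+\cN_{H_0}(n)$, and the net operation is exactly $H_v$ up to the diagonal $\Delta$ and the permutation $\Piv_v$, matching the notation $\Delta\Pi H_v$ used in Table~\ref{tab:explicit-counts}.

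The step doing the real work—and the main thing to get right—is the cancellation of the second pivoting: the naive substitution contains two pivoting operations ($\Delta\Piv_v$ and its inverse), and pivoting is the dominant ingredient, so paying for it only once is where the saving comes from. Allowing the output to carry the extra $\Delta\Piv_v$ is precisely what enables this, and it is harmless because those factors are tracked and absorbed elsewhere in the sparse Householder decomposition, where they supply exactly the diagonal and permutation $\Pi^i$ appearing in the definition of $G_i$. In writing up the full proof I would also verify the routine bookkeeping: that $(I_{n-s}\ot\SP_{\tilde v})\,H_0\,(I_{n-s}\ot\SP_{\tilde v})^\dagger$ is the reflection about $\ket{0}_{n-s}\ot\ket{\tilde v}$ (a conjugated reflection is the reflection about the conjugated vector), which confirms its agreement with $(\Delta\Piv_v)H_v(\Delta\Piv_v)^\dagger$, and that the ancilla counts quoted for the pivoting gate and for $H_0$ are compatible, so that no additional ancillas are introduced beyond those listed.
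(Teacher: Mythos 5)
Your proposal is correct and follows essentially the same route as the paper's own proof: substituting the sparse state preparation of Eq.~\eqref{eq:ssp} into Lemma~\ref{lemma:sp} and absorbing one copy of $\Delta\Piv_v$ into the permitted diagonal-and-permutation factor, yielding exactly the identity $(\Delta\Piv_v)H_v=(I_{n-s}\ot\SP_{\tilde{v}})H_0(I_{n-s}\ot\SP_{\tilde{v}})^\dagger\Delta\Piv_v$ and the stated count. Your write-up merely makes explicit the bookkeeping (the conjugated-reflection identity and the cancellation of the second pivoting) that the paper leaves implicit.
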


\begin{proof}
It follows from Lemma~\ref{lemma:sp} and Eq.\ \eqref{eq:ssp} that $H_v$ can be implemented up to diagonal and permutation gates as
$$(\Delta\Piv_v)H_v=(I_{n-s}\ot\SP_{\tilde{v}})H_0(I_{n-s}\ot\SP_{\tilde{v}})^\dagger\Delta\Piv_v,$$
which gives the claimed bound.
\end{proof}

Now we can give the \cnot{} counts for the sparse Householder decomposition. First we define
\begin{equation} \label{eq:elim}
\elim(W, \rho,\sigma)=\sum_{i=0}^{2^m-1} (\nnz(w_i)-1)
\end{equation}
to be the total number of eliminations during the Householder decomposition of $W$ when using the elimination strategy $(\rho,\sigma)$. Recall that $\ket{w_i}$ denotes the column reduced in the $i^\text{th}$ step of the decomposition, which differs in general from $W\ket{\sigma(i)}$.

\begin{lemma} \label{sparse-iso}
Let $W$ be an isometry from $m$ to $n$ qubits and let $(\rho,\sigma)$ be an elimination strategy. Then $W$ can be implemented using
$$\cN_{\mathrm{iso}}(n,m)\leq\sum_{i=0}^{2^m-1} \cN_H^{\Delta\Pi}(n,s(i))+\cN_{\Pi I\Delta}(n,m)$$
\cnots{} where $s(i) = \ceil{\log_2(1+\nnz(w_i))}$. Explicit counts are given in Table~\ref{tab:explicit-iso-counts}.
\end{lemma}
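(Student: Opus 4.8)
The plan is to read the \cnot{} count directly off the iterative decomposition $G_{2^m-1}\cdots G_0 W = \Pi_n I_{n,m}\Delta_m$ already established above. Inverting this relation gives $W = G_0^\dagger G_1^\dagger\cdots G_{2^m-1}^\dagger\,\Pi_n I_{n,m}\Delta_m$, so that an implementation of $W$ consists of the permuted diagonal isometry $\Pi_n I_{n,m}\Delta_m$ followed by the $2^m$ inverse gates $G_i^\dagger$. Since reversing a circuit leaves its \cnot{} count unchanged, it suffices to bound the cost of $\Pi_n I_{n,m}\Delta_m$ and of each $G_i$ separately and then sum.

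First I would invoke Lemma~\ref{lemma:perm-diag} for the permuted diagonal isometry, which contributes exactly $\cN_{\Pi I\Delta}(n,m)$ \cnots{}. The main work is then to bound the cost of each $G_i = \Delta\Pi^i H_{w_i,\tilde\rho(i)}$. Each $G_i$ is precisely a standard Householder reflection realized up to a diagonal and a permutation gate, so Lemma~\ref{lemma:hr-up-to} applies and bounds its cost by $\cN_H^{\Delta\Pi}(n,s)$, where $s$ is determined by the number of non-zero entries of the Householder vector $\ket{u_i}$ of the reflection $H_{w_i,\tilde\rho(i)}$.

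The key estimate is to relate $\nnz(u_i)$ to $\nnz(w_i)$. By Eq.~\eqref{eq:householder-vector}, $\ket{u_i}$ is proportional to $\ket{w_i}-\e^{\ii\theta}\ket{\tilde\rho(i)}$, i.e., $\ket{w_i}$ with a single computational basis vector subtracted; this changes at most one entry from zero to non-zero, namely when $\tilde\rho(i)$ indexes a zero entry of $\ket{w_i}$. Hence $\nnz(u_i)\leq \nnz(w_i)+1 \leq 2^{s(i)}$ with $s(i)=\ceil{\log_2(1+\nnz(w_i))}$, so the Householder vector falls within the class of states to which Lemma~\ref{lemma:hr-up-to} applies with parameter $s(i)$, and the cost of $G_i$ is at most $\cN_H^{\Delta\Pi}(n,s(i))$. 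Summing over $i=0,\dots,2^m-1$ and adding the contribution $\cN_{\Pi I\Delta}(n,m)$ of the permuted diagonal isometry yields the claimed bound.

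I expect the only real subtlety to be the ``$+1$'' in the definition of $s(i)$: one must be careful that the target row $\tilde\rho(i)$ of the $i^{\text{th}}$ reflection can lie at a position where $\ket{w_i}$ vanishes, so that the Householder vector is genuinely up to one entry denser than the column $\ket{w_i}$ being eliminated. A secondary point worth checking is that the accumulated permutations $\Pi^i$ produced when each reflection is realized only up to a permutation are correctly folded into the shifted target index $\tilde\rho(i)$ and ultimately into the final $\Pi_n$; but this bookkeeping is exactly the content of the iteration set up before the statement and so may be taken as given, leaving the present proof as essentially a summation of the per-step costs.
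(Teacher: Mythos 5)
Your proposal is correct and follows essentially the same route as the paper: the paper's proof likewise observes that each step is the standard Householder reflection $H_{w_i,\tilde\rho(i)}$ whose Householder vector (Eq.~\eqref{eq:householder-vector}) has at most one more non-zero entry than $\ket{w_i}$, which accounts for the $1+\nnz(w_i)$ in $s(i)$, and then sums the per-reflection costs from Lemma~\ref{lemma:hr-up-to} together with the cost of the final permuted diagonal isometry from Lemma~\ref{lemma:perm-diag}. Your write-up merely makes explicit the circuit inversion and permutation bookkeeping that the paper leaves implicit in the setup preceding the lemma.
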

\begin{proof}
In the Householder decomposition the steps involve $H_{w_i,\tilde{\rho}(i)}$. This is a standard Householder reflection with respect to a vector that may have one additional non-zero element (see Eq.~\eqref{eq:householder-vector}). The bound given then follows from the sparse Householder decomposition.
\end{proof}

The count resulting from Lemma~\ref{sparse-iso} depends on the chosen elimination strategy. The optimal strategy is the one that minimizes the amount of fill-in produced and therefore the number of eliminations required.

\begin{remark}\label{rmk:betterCU}
It can be beneficial to use the idea of the sparse Householder decomposition, without adhering to the exact form given above. For example, using a single standard Householder reflection we can implement a $k$-controlled single-qubit gate up to diagonal. This gate can be used to improve the \cnot{} count of the column-by-column decomposition~\cite{isometries}. Indeed, without loss of generality we can assume that the least significant qubit is the target qubit of the $k$-controlled single-qubit gate. Then the corresponding unitary is the identity matrix except for the $2\times 2$ block in the bottom right corner. We can reduce the penultimate column (up to diagonal) using a standard Householder reflection with respect to $\ket{1\ldots10}_n$ and two single-qubit gates for the state preparation and reverse state preparation. The \cnot{} count is then that of a $k$-controlled \nt{} gate by a similar argument as in Lemma~\ref{lemma:reflections}.
\end{remark}

\begin{remark}\label{rmk:sparse-iso}
To obtain a more explicit bound we can plug in the counts from Table~\ref{tab:explicit-counts}:
\begin{align*}
\cN_{\mathrm{iso}}&(n,m)
\leq\sum_{i=0}^{2^m-1}\left[ (n+16s(i)-5)(1+\nnz(w_i))+16n\right]\\
&\qquad\ \ \,+(n+34m-34)2^m\\
&\leq(17n-5)\sum_{i=0}^{2^m-1}\nnz(w_i) +(34n+34m-39)2^m\\
&=(17n-5)\elim(W,\rho,\sigma)+(51n+34m-44)2^m,
\end{align*}
where we have used $s(i)\leq n$. The given bound is valid with one dirty ancilla.
\end{remark}

\begin{remark}
For very sparse isometries on a small number of qubits, the implementation cost of the $H_0$ gates used to implement standard Householder reflections (see Lemma~\ref{lemma:hr-up-to}) dominates the total number of \cnots{} required to implement the sparse isometry. However, on some experimental architectures it may be possible to directly implement $H_0$ gates by evolving the quantum system with an adapted Hamiltonian (see, e.g.,~\cite{fenner}). Such architectures would hence allow for a very low cost implementation of sparse isometries on a small number of qubits.
\end{remark}

\subsubsection{Fill-in and envelopes} \label{sec:envelopes}

In order to gain as much advantage as possible from the sparseness of an isometry, we need to minimize fill-in as much as possible, which corresponds to choosing $\rho$ and $\sigma$ so as to minimize $\elim(W,\rho,\sigma)$. Reducing a column of $W$ in general affects other columns and creates new non-zero entries. Due to the orthogonality of the columns however, Householder reflections create little fill-in when applied to isometries. In fact, it follows immediately from Corollary~\ref{corr:fill-in} that when reducing column $j$ to row $i$ fill-in can only occur in columns that are non-zero in the $i^{\text{th}}$ entry and fill-in is confined to the rows where $W\ket{j}$ is non-zero.

We use matrix envelopes to give a bound on the amount of fill-in the sparse Householder decomposition produces. The envelope of a sparse matrix gives for each column of the matrix the row index of the lowest non-zero element (i.e., the largest row index of a non-zero element) in that column or any previous column.

\begin{definition} \label{def:env}
Let $W$ be an isometry from $m$ to $n$ qubits. The \emph{envelope} of $W$ is defined to be the function $\envel_W : \{0, 1, \dots, 2^m-1\} \rightarrow \{0, 1, \dots, 2^n-1\}$ that maps each column index $j$ to the smallest row index $\envel_W(j)$ such that $\melem{i}{W}{j} = 0$ for all $i > \envel_W(j)$ and such that $\envel_W(j+1) \ge \envel_W(j)$.
\end{definition}

\begin{definition} \label{def:ed}
Let $W$ be an isometry from $m$ to $n$ qubits. Define $\ed(W) = \sum_{j=0}^{2^m-1}(\envel_{W}(j)-j)$. Then $\ed(W)$ denotes the number of entries between the envelope and the diagonal of $W$.
\end{definition}

The definition of $\ed(W)$ is motivated by the following result.

\begin{lemma} \label{lemma:elim-ed-bound}
Let $W$ be a sparse isometry from $m$ to $n$ qubits and let $(\rho,\sigma)$ be arbitrary row and column permutations. Let $\Pi_\rho$ and $\Pi_\sigma$ denote the corresponding permutation matrices. Then $\elim(W,\rho,\sigma) \leq \ed(\Pi_\rho W \Pi_\sigma)$.
\end{lemma}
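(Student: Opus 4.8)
The plan is to reduce the claim to the case of the trivial elimination strategy and then prove the resulting envelope bound by induction on the reduction step, using Corollary~\ref{corr:fill-in} to control fill-in.

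First I would set $W' = \Pi_\rho W \Pi_\sigma$ and argue that $\elim(W,\rho,\sigma) = \elim(W',\mathrm{id},\mathrm{id})$, where $\mathrm{id}$ denotes the trivial strategy that reduces column $i$ to row $i$. The point is that conjugating a Householder reflection $H_v$ by a permutation matrix $P$ yields $H_{Pv}$, and that the bookkeeping permutation and diagonal gates $\Pi^i,\Delta$ appearing in $G_i$ preserve the number of non-zero entries of every column. Hence the decomposition of $W$ under $(\rho,\sigma)$ is, after relabelling rows by $\rho$ and columns by $\sigma$, literally the trivial-strategy decomposition of $W'$, and the two produce the same sequence of counts $\nnz(w_i)$. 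It therefore suffices to prove $\elim(W',\mathrm{id},\mathrm{id}) \leq \ed(W')$.

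For this I would run the decomposition of $W'$ and maintain the following invariant on the matrix $W'_i$ obtained after columns $0,\dots,i-1$ have been reduced: (a) for every $k<i$ the $k$-th column of $W'_i$ equals the basis vector $\ket{k}$; and (b) for every $k\ge i$ every non-zero entry of the $k$-th column lies in a row $s\le\envel_{W'}(k)$. The base case $i=0$ is exactly the definition of $\envel_{W'}$. For the step, note that $W'_i$ is an isometry whose first $i$ columns are $\ket{0},\dots,\ket{i-1}$, so orthogonality forces $w_i=W'_i\ket{i}$ to vanish in rows $0,\dots,i-1$; combined with (b) for $k=i$ this confines the non-zero entries of $w_i$ to rows $i,\dots,\envel_{W'}(i)$, giving $\nnz(w_i)\le\envel_{W'}(i)-i+1$. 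To propagate the invariant to $W'_{i+1}$ I would invoke Corollary~\ref{corr:fill-in}: reducing column $i$ changes the entry in position $(s,k)$ only if $\melem{i}{W'_i}{k}\ne0$ and $\melem{s}{W'_i}{i}\ne0$. For $k<i$ the reduced column has its only non-zero entry in row $k\ne i$, so $\melem{i}{W'_i}{k}=0$ and (a) is preserved; for $k>i$ any fill-in row $s$ must be a non-zero row of column $i$, so by (b) $s\le\envel_{W'}(i)\le\envel_{W'}(k)$, where the last inequality is the monotonicity built into Definition~\ref{def:env}, and (b) is preserved.

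Summing the per-step bound then gives $\elim(W',\mathrm{id},\mathrm{id})=\sum_{i}(\nnz(w_i)-1)\le\sum_{i}(\envel_{W'}(i)-i)=\ed(W')$, which completes the argument. I expect the main obstacle to be the first paragraph: making the relabelling precise requires fixing conventions for $\Pi_\rho,\Pi_\sigma$ and checking that the step-wise permutation gates $\Pi^i$ (which realign the reduced columns) genuinely leave each $\nnz(w_i)$ unchanged, so that the reduction to the trivial strategy is an exact identity rather than merely an inequality. The combinatorial heart, by contrast, is the two-part invariant, where the monotonicity of the envelope is precisely what keeps fill-in inside the envelope of later columns.
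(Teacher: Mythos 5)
Your proposal is correct and takes essentially the same route as the paper's proof: reduce to the trivial elimination strategy on $\Pi_\rho W \Pi_\sigma$, then combine Corollary~\ref{corr:fill-in} (fill-in is confined to the envelope) with orthogonality of the columns (no entry above the diagonal ever needs eliminating) to bound each $\nnz(w_i)-1$ by $\envel(i)-i$ and sum. The only difference is that you spell out the paper's terse argument as an explicit two-part induction invariant, which is added detail rather than a different approach.
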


\begin{proof}
Let $\tilde{W}=\Pi_\rho W\Pi_\sigma$ for some permutations $(\rho,\sigma)$. Reducing the columns of $W$ according to $(\rho, \sigma)$ is equivalent in terms of fill-in to reducing $\tilde{W}$ according to the trivial elimination strategy $(\iota,\iota)$ where $\iota$ denotes the identity permutation, i.e., $\elim(W,\rho,\sigma) = \elim(\tilde{W},\iota,\iota)$. It follows from Corollary~\ref{corr:fill-in} that the fill-in for $\tilde{W}$ is confined to entries $(i,j)$ with $i\leq\envel(j)$.  Due to orthogonality of the columns, if we proceed in column order we never need to eliminate any elements above the diagonal, so $\elim(\tilde{W},\iota,\iota) \leq \ed(\tilde{W})$.
\end{proof}

Finding the row and column permutations that minimize the envelope is a computationally difficult task. Methods for similar problems have been considered in the context of sparse matrix decompositions~\cite{colqr}. Note that once a column permutation is fixed, the optimal row permutation can be found in the following straightforward way.

Let $W$ be an isometry from $m$ to $n$ qubits. Consider the following algorithm for constructing a modified isometry $W'$.

\begin{algorithm}\label{alg:1}
\quad
\begin{enumerate}
\item Set $i=0$, $j=0$ and $W'=W$.
\item \label{st:2} Set $k$ to be the number of non-zero elements in the column with index $j$ with row indices greater than or equal to $i$.
\item If $k\neq 0$, permute the rows with indices greater than or equal to $i$ such that these $k$ non-zero elements have row indices $i$ to $i+k-1$, assign the new isometry to $W'$ and set $i=i+k$.
\item If $j<2^m-1$, set $j=j+1$ and return to Step~\ref{st:2}, otherwise
  output $W'$.
\end{enumerate}
\end{algorithm}

\begin{lemma}
Let $W$ be an isometry from $m$ to $n$ qubits and $W'$ be the output after applying Algorithm~\ref{alg:1} to $W$. For all row permutations $\rho'$ we have $\envel_{W'}(j)\leq \envel_{\Pi_{\rho'} W}(j)$ for all $j$.
\end{lemma}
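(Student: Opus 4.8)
For the isometry $W'$ output by Algorithm~\ref{alg:1}, we have $\envel_{W'}(j)\leq\envel_{\Pi_{\rho'}W}(j)$ for every column index $j$ and every row permutation $\rho'$; that is, the greedy algorithm produces the envelope-minimizing row permutation.

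The plan is to establish a single structural invariant about the partial sums of column support sizes and then read off the optimality of the envelope from it. First I would introduce, for each column index $j$, the quantity $c_j$ counting the number of rows that contain a non-zero entry in at least one of the columns $0,1,\dots,j$; concretely $c_j = |\{i : \melem{i}{W}{k}\neq 0 \text{ for some } k\leq j\}|$. This quantity is manifestly invariant under any row permutation, since permuting rows just relabels which rows the non-zeros sit in without changing how many distinct rows are occupied. The key observation is that for \emph{any} row permutation $\rho'$, the envelope satisfies the lower bound $\envel_{\Pi_{\rho'}W}(j)\geq c_j-1$: the non-zeros of columns $0,\dots,j$ occupy $c_j$ distinct rows, and since the envelope value at $j$ is (by Definition~\ref{def:env}) at least the largest row index among these, and there are $c_j$ of them occupying distinct indices, the largest such index is at least $c_j-1$. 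The monotonicity clause $\envel(j+1)\geq\envel(j)$ in the definition is automatically respected because $c_j$ is non-decreasing in $j$.

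Second I would show that Algorithm~\ref{alg:1} \emph{achieves} this lower bound, i.e.\ $\envel_{W'}(j)=c_j-1$ for all $j$. This follows by induction on the loop iterations. The invariant to carry is that after processing column $j$, the running index $i$ equals $c_j$ and all non-zero entries in columns $0,\dots,j$ lie in rows $0,\dots,c_j-1$ with no empty rows below the occupied block. At the step that processes column $j$, the algorithm counts the $k$ non-zeros in column $j$ lying in rows $\geq i$ (the rows not already forced to be occupied by earlier columns), compresses exactly those into rows $i,\dots,i+k-1$, and advances $i$ by $k$. Since $k$ is precisely the number of \emph{newly} occupied rows contributed by column $j$, the updated $i$ equals $c_j$, and the non-zeros of columns $0,\dots,j$ fill rows $0,\dots,c_j-1$ exactly. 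Hence the lowest non-zero row index in any of columns $0,\dots,j$ is $c_j-1$, which is by definition $\envel_{W'}(j)$.

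Combining the two parts gives $\envel_{W'}(j)=c_j-1\leq\envel_{\Pi_{\rho'}W}(j)$ for every $j$ and every $\rho'$, which is the claim. I expect the main obstacle to be bookkeeping rather than conceptual: one must verify carefully that the rows the algorithm compresses at step $j$ are exactly the newly-occupied ones, i.e.\ that permuting only rows with indices $\geq i$ never disturbs the already-placed non-zeros of earlier columns (true because those occupy rows $0,\dots,i-1$, which are left fixed), and that counting $c_j$ as $i$ after each step is maintained across iterations even when some column introduces $k=0$ new rows. A minor subtlety worth checking is the boundary behaviour when a column is entirely zero in the unprocessed rows, so that the envelope stays flat; this is consistent with both $c_j=c_{j-1}$ and the monotonicity requirement.
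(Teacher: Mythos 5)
Your proof is correct and is essentially the same as the paper's: your quantity $c_j$ is exactly the paper's $t(j)=2^n-|\{i:\bra{i}W\ket{j'}=0\ \forall\ j'\in\{0,1,\ldots,j\}\}|$, and both arguments combine the permutation-invariant lower bound $\envel_{\Pi_{\rho'}W}(j)\geq t(j)-1$ with the observation that Algorithm~\ref{alg:1} achieves $\envel_{W'}(j)=t(j)-1$ by construction. The only difference is that you spell out the inductive bookkeeping for the algorithm's invariant, which the paper compresses into ``by construction.''
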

\begin{proof}
Given a column index $j$, let $t(j)$ be the number of rows such that
for each of the columns with indices smaller than or equal to $j$,
those rows have at least one non-zero element, i.e.,
$$t(j):=2^n-|\{i:\bra{i}W\ket{j'}=0\ \forall\ j'\in\{0,1,\ldots,j\}\}|\,.$$
It follows that for any row permutation $\rho'$ we have
$\envel_{\Pi_{\rho'} W}(j)\geq t(j)-1$ for all $j$ (recall that $\envel_{\Pi_{\rho'} W}$ is a
non-decreasing function by definition).  However, by construction,
$\envel_{W'}(j)=t(j)-1$ for all $j$, from which the claim follows.
\end{proof}

The difficult part is therefore finding the best column permutation. In this work we consider a simple greedy algorithm for finding a good column permutation.

\begin{algorithm} \label{alg:greedy-envelope}
\quad
  \begin{enumerate}
\item Set $i=0, j=0$ and $W'=W$.
\item\label{st:22} Set $M$ to be the submatrix of $W'$ formed by only considering rows with row index greater than or equal to $i$ and columns with column index greater than or equal to $j$.
\item Pick one of the columns of $M$ with the fewest non-zero elements and set $k$ to be the number of non-zero elements.
\item Permute the columns of $W'$ with column index greater than or equal to $j$ such that when restricting the permutation to $M$, the chosen column becomes the first column of $M$. Then permute the rows of $W'$ with row index greater than or equal to $i$ such that when restricting the permutation to $M$, all non-zero elements of the chosen column are moved to the top. Set $i=i+k$, $j=j+1$.
\item If $j<2^m$ and $i<2^n$ return to Step~\ref{st:22}, otherwise output $W'$.
  \end{enumerate}
  \end{algorithm}
    This algorithm corresponds to minimizing the increment of the envelope at each step. More information on an efficient way to implement it can be found in Section~\ref{sec:implementation}.

\subsection{Fixed envelope method}
The asymptotic \cnot{} count for the sparse Householder decomposition contains an undesirable factor $n$ stemming from Lemma~\ref{lemma:pivot}. We now show how this factor can be avoided if we give up some control on the amount of fill-in.

For this we make use of the decrement gate $\Dec_n$ which subtracts $1$ in the computational basis (modulo $2^n$), i.e., $\Dec_n=\sum_{i=0}^{2^n-1}\op{i}{i\oplus 1}$ where $\oplus$ denotes addition modulo $2^n$. This gate can be implemented using $\cO(n)$ \cnots{} and one ancilla qubit~\cite{decrement}. The method is illustrated in Figure~\ref{fig:modified}.

\begin{figure*}[!t]
\centering
$$W=
\begin{bmatrix}
    0 & 0 & 0 & * \\
    * & * & * & * \\
    0 & * & * & 0 \\
    * & 0 & * & 0
\end{bmatrix}
 \overset{\Pi_\rho}{\longmapsto}
\begin{bmatrix}
    * & * & * & * \\
    * & 0 & * & 0 \\
    0 & * & * & 0 \\
    0 & 0 & 0 & *
\end{bmatrix}
 \overset{H_{w_{\sigma(0)},0}}{\longmapsto}
\begin{bmatrix}
    \odot & - & - & - \\
    \times & + & * & + \\
    0 & * & * & 0 \\
    0 & 0 & 0 & *
\end{bmatrix}
\overset{\Dec_2}{\longmapsto}
\begin{bmatrix}
    0 & * & * & * \\
    0 & * & * & 0 \\
    0 & 0 & 0 & * \\
    * & 0 & 0 & 0
\end{bmatrix}$$
\caption{{\bf Illustration of the first step of the fixed envelope method.} Here $*$ represents an arbitrary complex entry, $\odot$ denotes the target entry of the reduction, $\times$ stands for an entry that was eliminated, $-$ denotes entries eliminated due to the orthogonality constraint and $+$ means that fill-in occurs. Here $\sigma(0) = 0$.}
\label{fig:modified}
\end{figure*}

\begin{lemma} \label{lemma:modified-method}
Let $W$ be a sparse isometry from $m$ to $n$ qubits and let $(\rho,\sigma)$ be some row and column permutations. Then $W$ can be implemented using
$$\cN_{\mathrm{iso}}(n,m)\leq \mathcal{N}^{\Delta}_{\Pi}(n)+\sum_{i=0}^{2^m-1}\mathcal{N}_{\mathrm{col}}(s(i))+\mathcal{N}_{\Pi}^{\Delta}(m)+\mathcal{N}_{\Delta}(m)$$
\cnots{} where $s(i) = \ceil{\log_2(1+\envel_{\Pi_\rho W \Pi_\sigma}(i)-i)}$ and where
$$\mathcal{N}_{\mathrm{col}}(s(i))=2\mathcal{N}_{\SP}(s(i))+\mathcal{N}_{H_0}(n)+\mathcal{N}_{\Dec_n}.$$
Explicit counts are given in Table~\ref{tab:explicit-iso-counts}.
\end{lemma}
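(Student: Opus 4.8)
The plan is to mimic the column-by-column reduction of the sparse Householder decomposition (Lemma~\ref{sparse-iso}), but to replace the pivoting gate -- the source of the unwanted factor $n$ -- by a decrement gate $\Dec_n$ that re-aligns the active part of each column to the top before the Householder reflection is applied. Concretely, I would first apply the row permutation $\Pi_\rho$ up to a diagonal, at cost $\mathcal{N}_\Pi^\Delta(n)$, so that it suffices to reduce $\tilde W = \Pi_\rho W \Pi_\sigma$, whose envelope is $\envel_{\tilde W}$ (the column permutation $\Pi_\sigma$ is a relabelling of the $2^m$ inputs and is absorbed into the final $m$-qubit cleanup). The engine of the proof is the invariant that, at step $i$, after columns $0,\dots,i-1$ have been reduced and the corresponding rows parked, the column to be reduced next has all of its non-zero entries in the top $\envel_{\tilde W}(i)-i+1$ rows of the current active region.

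Granting the invariant, each reduction costs $\mathcal{N}_{\mathrm{col}}(s(i))$. I reduce the current column to the top row ($\ket{0}$ of the active region) by a standard Householder reflection $H_{w_i,0}$. Since the Householder vector is supported on the top $\envel_{\tilde W}(i)-i+1\le 2^{s(i)}$ basis states, it has the form $\ket{0}_{n-s(i)}\ot\ket{\tilde u}_{s(i)}$, so by Lemma~\ref{lemma:sp} the reflection factors as $(I_{n-s(i)}\ot\SP_{\tilde u})\,H_0\,(I_{n-s(i)}\ot\SP_{\tilde u})^\dagger$, contributing $2\mathcal{N}_{\SP}(s(i))+\mathcal{N}_{H_0}(n)$. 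I then apply $\Dec_n$, which cyclically shifts every row up by one: the just-reduced top row (now a computational basis vector) is parked at the bottom and the active region advances, adding $\mathcal{N}_{\Dec_n}$. Summing over the $2^m$ columns yields the middle term $\sum_i\mathcal{N}_{\mathrm{col}}(s(i))$. After all reductions the $2^m$ parked columns occupy the contiguous bottom block (the rows with the top $n-m$ qubits all in state $\ket{1}$), so the decrements have performed the pivoting for free; the block is moved to $I_{n,m}$ form by single-qubit \nt{} gates on the top $n-m$ qubits (free in \cnots{}), leaving only a permuted diagonal isometry on the bottom $m$ qubits. Decomposing this as in Lemma~\ref{lemma:perm-diag} (with the pivoting already done) contributes the remaining $\mathcal{N}_\Pi^\Delta(m)+\mathcal{N}_\Delta(m)$, and all accumulated phases, from $\Pi_\rho$ and from the reductions performed up to diagonal, are collected into this final $m$-qubit diagonal.

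The main work -- and the step I expect to be the main obstacle -- is establishing the invariant, which I would prove by induction on $i$. At the start, Definition~\ref{def:env} confines the non-zeros of column $i$ to rows $0,\dots,\envel_{\tilde W}(i)$. When column $j<i$ is reduced to the top row, orthogonality of the columns forces a zero in row $0$ of every other column, while Corollary~\ref{corr:fill-in} confines any fill-in in column $i$ to the rows where the reduced column is non-zero; since $\envel_{\tilde W}$ is non-decreasing we have $\envel_{\tilde W}(j)\le\envel_{\tilde W}(i)$, so the fill-in never escapes the current top block. The ensuing decrement shifts the zeroed top row harmlessly into the parked region and raises the remaining support of column $i$ by one row. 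Iterating $i$ times lowers the top index from $\envel_{\tilde W}(i)$ to $\envel_{\tilde W}(i)-i$, a contiguous block of size $\envel_{\tilde W}(i)-i+1$, which is exactly what $s(i)=\ceil{\log_2(1+\envel_{\tilde W}(i)-i)}$ encodes (note $\envel_{\tilde W}(i)\ge i$, since the first $i+1$ orthonormal columns occupy at least $i+1$ rows among $0,\dots,\envel_{\tilde W}(i)$). The delicate point requiring care is verifying that the entries wrapping around under each decrement are exactly the orthogonality-induced zeros and never carry non-zero mass; this holds because we always reduce to the very top row of the active region, so the wrapped row is zero in every not-yet-reduced column.
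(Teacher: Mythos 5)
Your proposal is correct and follows essentially the same route as the paper's proof: apply the row permutation up to a diagonal, reduce each column to the topmost row by a Householder reflection (implemented as in Lemma~\ref{lemma:hr-up-to} but with no pivoting needed), apply $\Dec_n$ to park the reduced row at the bottom, and finish with $X^{\ot n-m}\ot I_m$ plus an $m$-qubit permutation (up to diagonal) and diagonal gate. The only difference is that you spell out, via induction using Corollary~\ref{corr:fill-in}, orthogonality, and monotonicity of the envelope, the invariant that the paper dismisses with ``by construction'' --- a welcome elaboration, not a different argument.
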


\begin{proof}
We consider reducing $W$ in the following way. First apply the row permutation $\rho$ up to diagonal. Then reduce the columns in the order given by the column permutation $\sigma$. For each column, use a Householder reflection to reduce it to the topmost row. Apply the decrement gate and then move to the next column. After all columns have been reduced in this way we apply $X^{\ot n-m}\ot I_m$. The resulting isometry has the form $I_n \Pi_m \Delta_m$ which can be reduced to the identity by applying a permutation on $m$ qubits up to diagonal and a diagonal gate on $m$ qubits.

By construction, before each Householder reflection, all non-zero entries in the column being reduced are in the topmost $2^{s(i)}$ positions.  We can hence perform the Householder reflections using the method of Lemma~\ref{lemma:hr-up-to} but omitting the pivoting steps.
\end{proof}

\begin{remark} \label{rmk:fixed-env}
To obtain a more explicit bound we can plug in the counts from Table~\ref{tab:explicit-iso-counts} to obtain
$$
\cN_{\mathrm{iso}}(n,m)
\leq 4\ed(\Pi_\rho W \Pi_\sigma)+\cO(n2^n).
$$
This bound is valid with one dirty ancilla. The factor $4$ stems from the fact that each Householder reflection uses state preparation twice and each state preparation acts on $s(i)$ qubits where $2^{s(i)}$ is at most twice the height of the envelope $1+\envel_{\Pi_\rho W \Pi_\sigma}(i)-i$ in column $i$ of $\Pi_\rho W \Pi_\sigma$.
\end{remark}

\subsection{No fill-in method}

Using a clean ancilla qubit we can avoid fill-in altogether. The method is illustrated in Figure~\ref{fig:no-fill}.

\begin{figure*}[!t]
\centering
$$\tilde{W}=
\begin{bmatrix}
    0 & 0 & 0 & * \\
    * & * & * & * \\
    0 & * & * & 0 \\
    * & 0 & * & 0 \\
    0 & 0 & 0 & 0 \\
    0 & 0 & 0 & 0 \\
    0 & 0 & 0 & 0 \\
    0 & 0 & 0 & 0
\end{bmatrix}
 \overset{H_{v_0,2^n}}{\longmapsto}
\begin{bmatrix}
    0 & 0 & 0 & * \\
    \times & * & * & * \\
    0 & * & * & 0 \\
    \times & 0 & * & 0 \\
    \odot & 0 & 0 & 0 \\
    0 & 0 & 0 & 0 \\
    0 & 0 & 0 & 0 \\
    0 & 0 & 0 & 0
\end{bmatrix}
 \overset{H_{v_1,2^n+1}}{\longmapsto}
\begin{bmatrix}
    0 & 0 & 0 & * \\
    0 & \times & * & * \\
    0 & \times & * & 0 \\
    0 & 0 & * & 0 \\
    * & 0 & 0 & 0 \\
    0 & \odot & 0 & 0 \\
    0 & 0 & 0 & 0 \\
    0 & 0 & 0 & 0
\end{bmatrix}
\overset{H_{v_2,2^n+2}}{\longmapsto}
\begin{bmatrix}
    0 & 0 & 0 & * \\
    0 & 0 & \times & * \\
    0 & 0 & \times & 0 \\
    0 & 0 & \times & 0 \\
    * & 0 & 0 & 0 \\
    0 & * & 0 & 0 \\
    0 & 0 & \odot & 0 \\
    0 & 0 & 0 & 0
\end{bmatrix}
\overset{H_{v_3,2^n+3}}{\longmapsto}
\begin{bmatrix}
    0 & 0 & 0 & \times \\
    0 & 0 & 0 & \times \\
    0 & 0 & 0 & 0 \\
    0 & 0 & 0 & 0 \\
    * & 0 & 0 & 0 \\
    0 & * & 0 & 0 \\
    0 & 0 & * & 0 \\
    0 & 0 & 0 & \odot
\end{bmatrix}$$
\caption{{\bf Illustration of the no fill-in method.} The clean ancilla leads to the empty $4\times4$ block at the start. Here $*$ represents an arbitrary complex entry, $\times$ stands for an entry that was eliminated and $\odot$ denotes the target entry of each reduction. We actually implement each of the Householder reflections up to permutation, so the final state is a row permutation of that shown (for simplicity we did not depict this).}
\label{fig:no-fill}
\end{figure*}

\begin{lemma} \label{lemma:no-fill-in}
Let $W$ be a sparse isometry from $m$ to $n$ qubits. Then, using one additional clean ancilla, $W$ can be implemented using
$$\cN_{\mathrm{iso}}(n,m)\leq \sum_{i=0}^{2^m-1}\cN_H^{\Delta\Pi}(n+1,s(i)) + \cN_{\Pi I \Delta}(n+1,m)$$
\cnots{} where $s(i) = \ceil{\log_2 (1+\nnz(W\ket{i}))}$.
Explicit counts are given in Table~\ref{tab:explicit-iso-counts}.
\end{lemma}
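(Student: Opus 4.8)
The plan is to mirror the sparse Householder decomposition of Lemma~\ref{sparse-iso}, but to exploit the extra clean ancilla to route each reduction into a fresh, previously empty row, thereby eliminating fill-in altogether. First I would embed $W$ into a taller isometry $\tilde{W}$ on $n+1$ qubits by adjoining the clean ancilla in state $\ket{0}$; concretely, $\tilde{W}$ is the $2^{n+1}\times 2^m$ matrix whose top $2^n$ rows coincide with $W$ and whose bottom $2^n$ rows (those with the ancilla equal to $1$) are identically zero. This is the empty lower block appearing at the start of Figure~\ref{fig:no-fill}.

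Next I would reduce the columns one at a time, sending column $i$ to the target row $2^n+i$ via a Householder reflection $H_{v_i,2^n+i}$, implemented up to diagonal and permutation as in Lemma~\ref{lemma:hr-up-to}. The heart of the argument is that this produces no fill-in whatsoever, and I would establish this by induction on $i$: the claim is that the target row $2^n+i$ is identically zero immediately before its reduction, since the earlier reductions target the distinct rows $2^n,\dots,2^n+i-1$. Corollary~\ref{corr:fill-in} then applies with target row $2^n+i$: because $\melem{2^n+i}{\tilde{W}}{t}=0$ for every column $t$, the fill-in condition is never met, so reducing column $i$ leaves every other entry of the isometry unchanged. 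In particular, the not-yet-reduced columns retain their original supports and each target row $2^n+i''$ with $i''>i$ stays empty, which closes the induction. Consequently, at the moment column $i$ is reduced it still has exactly $\nnz(W\ket{i})$ non-zero entries, and by Eq.~\eqref{eq:householder-vector} the associated Householder vector may acquire one further non-zero entry (from the target row), giving a support of size at most $1+\nnz(W\ket{i})$ and hence $s(i)=\ceil{\log_2(1+\nnz(W\ket{i}))}$. By Lemma~\ref{lemma:hr-up-to} each such step therefore costs $\cN_H^{\Delta\Pi}(n+1,s(i))$.

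After all $2^m$ columns have been reduced, each column has a single non-zero entry in a distinct row among $2^n,\dots,2^n+2^m-1$, so the isometry has the form $\Pi_{n+1}I_{n+1,m}\Delta_m$ for a permutation and a diagonal. I would clean this up using Lemma~\ref{lemma:perm-diag} at cost $\cN_{\Pi I\Delta}(n+1,m)$; since $I_{n+1,m}$ is supported on rows $0,\dots,2^m-1$, which lie in the ancilla-$\ket{0}$ subspace, this simultaneously restores the ancilla to its clean state. Summing the per-column costs with the final clean-up yields the claimed bound. The one step to get right is the inductive no-fill-in claim, namely that the target rows remain empty until used, so that Corollary~\ref{corr:fill-in} applies and $s(i)$ can be expressed purely through the original column supports $\nnz(W\ket{i})$ rather than through fill-in-dependent quantities as in Lemma~\ref{sparse-iso}.
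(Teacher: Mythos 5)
Your proposal is correct and follows essentially the same route as the paper's proof: embed $W$ as $\tilde{W}\ket{i}=\ket{0}\otimes W\ket{i}$ using the clean ancilla, reduce each column into one of the zero rows via Householder reflections implemented up to diagonal and permutation (Lemma~\ref{lemma:hr-up-to}), with Corollary~\ref{corr:fill-in} guaranteeing no fill-in, and finish with Lemma~\ref{lemma:perm-diag}. Your explicit induction on the emptiness of the target rows and the observation that the final clean-up restores the ancilla simply spell out details the paper leaves implicit.
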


\begin{proof}
We implement the isometry $\tilde{W}$ from $m$ to $n+1$ qubits defined by
$$\tilde{W} \ket{i} = \ket{0} \otimes W\ket{i}$$
which in the computational basis is just $W$ stacked on top of a zero matrix of the same size. Then each of the $2^m$ columns can be reduced to one of the $2^n$ zero rows without creating any fill-in. These reductions can be implemented using Householder reflections up to a diagonal and permutation gate as in Lemma~\ref{lemma:hr-up-to}. Then using Lemma~\ref{lemma:perm-diag} we reduce the resulting permuted diagonal isometry to the identity.
\end{proof}

\section{Classical Complexity}
\label{sec:classical-complexity}

In this section we compute the classical worst-case time complexity of some decompositions presented in this work. We compare the dense Householder decomposition to other known methods and we propose a sparse storage format which is well adapted to the sparse Householder decomposition.

\subsection{Dense isometries}

First we consider the dense case. For each column we have to compute the corresponding Householder vector (see Eq.\ \eqref{eq:householder-vector}) and apply the Householder reflection to the entire isometry and produce the circuit implementing the Householder reflection. Computing the vector takes $\cO(2^n)$ and applying the reflection takes $\cO(2^{m+n})$. Producing the circuit takes $\cO(2^{3n/2})$ according to Appendix B.4 of~\cite{compiler}. Thus the classical complexity is $\cO(2^{m+n}(2^m+2^{n/2}))$. For comparison, the column-by-column decomposition requires $\cO(n2^{2m+n})$, and Knill's decomposition and the Cosine-Sine decomposition both require $\cO(2^{3n})$~\cite{compiler}. A high performance implementation for a decomposition of dense unitaries based on Householder reflections is presented in~\cite{sparsedays}. 

\subsection{Sparse state preparation} \label{sec:ssp_impl}

The non-zero pattern of a sparse state on $n$ qubits with at most $2^s$ non-zero entries can be compactly stored as a list of the $n$-bit indices of the non-zero entries. This requires $\cO(n2^s)$ space. 

The pivoting algorithm, Algorithm~\ref{alg:piv}, can be implemented with some greedy optimizations. First there are $\binom{n}{s}$ possible splittings for the $n$ qubits. We will think of the vector as being reshaped into a two dimensional array with $2^s$ rows and $2^{n-s}$ columns corresponding to the chosen qubit splitting. If the number of possible splittings is small enough, we try all splittings and choose the one with the largest number of non-zero elements in one column and this will be the target column. Otherwise one can randomly sample a fixed number of splittings and use the best one. Second, in each insertion step we choose an element not yet in the target column for which the cost of inserting it into the target column is minimal and perform the insertion. We iterate this until all elements are in the target column.

The insertion of one element into the target column can be implemented using one $s$-controlled \nt{} gate and $d-1$ \cnots{}, where $d$ is the Hamming distance between the index of the non-zero entry and the index of the target entry. Thus we want to find a non-zero entry outside the target column and a zero entry in the target column for which the Hamming distance is minimal. We do this by finding for each non-zero entry outside the target column the closest zero entry in the target column. The Hamming distance can be written as $d=d_c+d_r$ where $d_c$ is the Hamming distance obtained when restricting the indices to the column indices, and $d_r$ is the part corresponding to the row indices. For each non-zero entry, we can compute $d_c$ in $\cO(n-s)$, which adds up to $\cO((n-s)2^s)$ for all non-zero entries outside the target column. We can compute $d_r$ for all non-zero entries at the same time in  $\cO(s2^s)$ by using breadth first search on the $s$-dimensional hypercube with multiple starting vertices, given by the row indices of the zero entries in the target column. More precisely, we store a list of length $2^s$, where each entry corresponds to one row and stores the minimal distance $d_r$ to some free row of the target column. The entries corresponding to free rows are initialized with distance $0$ and all other entries are initialized with distance $\infty$. Then we perform the usual breadth first search on the graph whose vertices are given by the entries of the list and whose edges connect any pair of entries whose indices have Hamming distance one. Performing the insertion takes $\cO(n2^s)$. The entire circuit implementing sparse state preparation with the greedy optimizations mentioned above can thus be computed in time $\cO(\binom{n}{s} + n2^{2s})$.

\subsection{Sparse isometries} \label{sec:implementation}

To store a sparse isometry $W$ we store two arrays $R$ and $C$ of size $2^n$ and $2^m$ respectively. For a given row index $i$, $R(i)$ stores a reference to a balanced tree (e.g. a red-black tree) containing for each non-zero element of the $i^{\text{th}}$ row a triplet of the form $(i,j,\melem{i}{W}{j})$. The elements of the tree are sorted according to the key $j$. Analogously we define $C(j)$ to store a reference to a tree containing the non-zero elements of the $j^{\text{th}}$ column. This requires $\cO(2^n+2^m+\nnz(W))$ space. Given row and column indices $i$ and $j$, the corresponding entry can be created, read, modified or deleted in time $\cO(n)$.

We now show how to reduce column $j$ to row $i$. From Corollary~\ref{corr:fill-in} we know that the modified entries are those in column $j$ and row $i$ and those with indices $s$ and $t$ such that $s \in C(j)$ and $t \in R(i)$. First we iterate over all choices for $s \neq i$ and $t \neq j$ and create or modify the entries according to Lemma~\ref{lemma:reduction}. Then we set the entries in column $j$ and row $i$ to zero except for the entry with indices $(i,j)$ which is determined by Lemma~\ref{lemma:reduction}. Thus each reduction can be done in time $\cO(n\,\mathrm{mod})$ where $\mathrm{mod}$ denotes the number of modified elements.

In Algorithm~\ref{alg:greedy-envelope} we presented a greedy method for constructing a permutation of an isometry leading to a small envelope. For a sparse isometry $W$ given in the data structure described above, the corresponding row and column permutations $\rho$ and $\sigma$ can be computed iteratively as follows. Using the notation from Algorithm~\ref{alg:greedy-envelope}, we only store the submatrix $M$, which is initially set to $W$. In each step we find the sparsest column of $M$, which will be the next column in the column permutation, and the rows containing the non-zero elements of this column, which will be the next rows in the row permutation. Then we simply delete the non-zero elements in the chosen column and rows and iterate the procedure. In order to find the sparsest column in each iteration, we maintain a minheap storing for each column the number of non-zero elements. The smallest element can be removed in time $\mathcal{O}(n)$ which yields the sparsest column. Then we can delete the elements as described above, each in time $\mathcal{O}(n)$. For every deleted element we decrease the number of non-zero elements in the containing column by one, and thus we may have to reorder the minheap. But this can also be done in time $\mathcal{O}(n)$. The procedure stops when all non-zero elements have been deleted after total time $\mathcal{O}(n\, \nnz(W))$.

\section{Results for sparse state preparation in practice}
\label{sec:numerical-results}

We compare the \cnot{} counts resulting from our sparse state preparation scheme presented in Section~\ref{sec:ssp} to the dense case. The implementation of the sparse state preparation scheme is described in Section~\ref{sec:ssp_impl}. In order to improve the classical computation time, we do not consider all possible qubit splittings, but randomly sample $100$ splittings and choose the one with the largest number of non-zero elements in one column. We use the dense state preparation scheme from~\cite{plesch}, which achieves near optimal \cnot{} counts for arbitrary dense states. All of the methods are implemented in {\sc UniversalQCompiler}~\cite{compiler}. The results are presented in Figure~\ref{fig:ssp}. These indicate the advantage we gain by taking into account the sparseness. Note however, that the dense case outperforms the sparse case for fairly dense states (where the cost of pivoting is not compensated by the smaller state preparation). It is also worth noting that the counts found in practice are significantly smaller than our upper bounds.

\begin{figure}[h]
\includegraphics[width=0.5\textwidth]{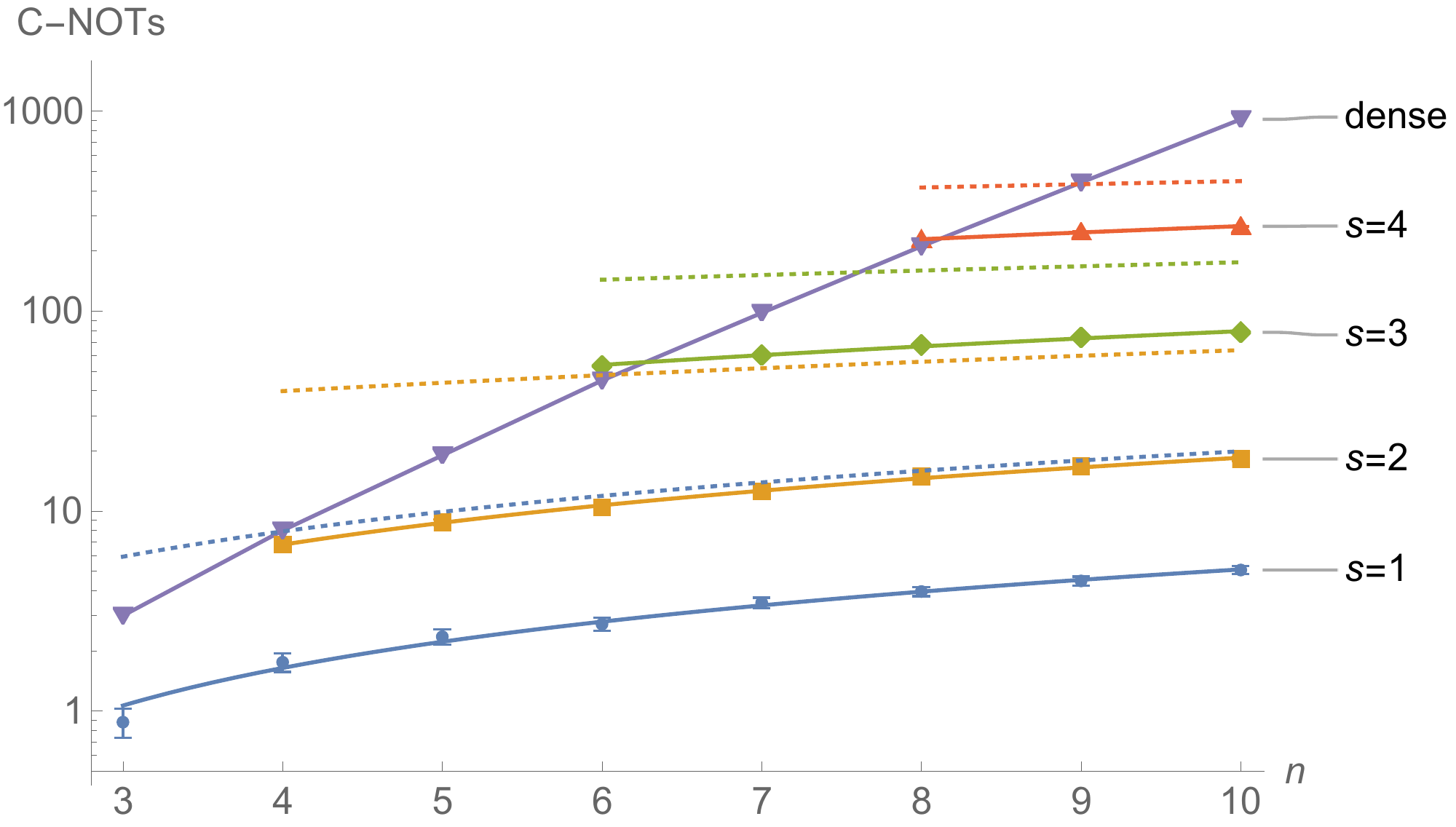}
\caption{{\bf Performance of sparse state preparation.}  This shows the average number of \cnot{} gates produced by our sparse state preparation scheme for sparse states on $n$ qubits with $2^s$ non-zero entries, whose positions are chosen uniformly at random.  Note that the actual values of the non-zero entries do not influence the counts. We used $200$ trials and the error bars indicate twice the standard error of the mean (about 95\% confidence). Note that for $s\geq2$ the error bars are too small to be visible. We added linear best fit lines (except in the dense case). No ancillas were used for our implementation. The dashed lines show the upper bound of $(n+6s-7+23/24)2^s$ from Table~\ref{tab:explicit-counts}, based on $\lceil\tfrac{s}{2}-1\rceil$ clean ancillas. Although our implementation does not use ancillas, it nevertheless beats this bound.}
\label{fig:ssp}
\end{figure}

\acknowledgements
The authors thank Vadym Kliuchnikov for pointing out Ref.~\cite{vadym}.

RC is supported by EPSRC's Quantum Communications Hub (grant numbers EP/M013472/1 and EP/T001011/1).
RI acknowledges support from the Swiss National Science Foundation through SNSF project No. 200020-165843 and through the National Centre of Competence in Research \textit{Quantum Science and Technology} (QSIT). 

\providecommand{\noopsort}[1]{}\providecommand{\singleletter}[1]{#1}%

\appendix

\section{Details on Householder reflections}
\label{A:vector}

Given two unit vectors $\ket{v}$ and $\ket{w}$ we want to construct a map that sends $\ket{v}$ to $\e^{\ii\theta}\ket{w}$ for some real $\theta$. This can be achieved using a Householder reflection with respect to the unit vector
$$\ket{u_\theta}=\frac{\ket{v}-\e^{\ii\theta}\ket{w}}{\|\ket{v}-\e^{\ii\theta}\ket{w}\|}.$$
Let us first compute the normalization
\begin{align*}
\|\ket{v}-\e^{\ii\theta}\ket{w}\|^2
&= (\bra{v}- \e^{-\ii\theta}\bra{w})(\ket{v}- \e^{\ii\theta}\ket{w}) \\
&= 2 - 2 \Re (\e^{\ii\theta}\ip{v}{w}) \\
&= 2 \Re (1-\e^{\ii\theta}\ip{v}{w}) \\
&= 2 \Re (z).
\end{align*}
where $z:=1-\e^{\ii\theta}\ip{v}{w}$.  If $z=0$ no transformation is needed, so assume $z\neq0$.

Consider the generalized Householder reflection $H_{u_\theta}^\phi$ acting on $\ket{v}$. We obtain
\begin{align*}
H_{u_\theta}^\phi \ket{v}
&= \ket{v} + \frac{\e^{\ii\phi} - 1}{2\Re (z)} (\ket{v}- \e^{\ii\theta}\ket{w})(1-\e^{-\ii\theta}\ip{w}{v})\\
&= \ket{v} + \frac{\e^{\ii\phi} - 1}{2} \frac{\bar{z}}{\Re (z)} (\ket{v}- \e^{\ii\theta}\ket{w}).
\end{align*}
Requiring $H_{u_\theta}^\phi \ket{v} \propto \ket{w}$  leads to the following condition
$$\frac{\e^{\ii\phi} - 1}{2} \frac{\bar{z}}{\Re (z)} = -1 \Leftrightarrow \e^{\ii\phi} = 1-\frac{\bar{z} + z}{\bar{z}} = \frac{-z}{\bar{z}}$$
which we can also write as $\phi = \pi+2\arg(z) \mod 2\pi$.

Note that numerical instabilities arise when the norm of $\ket{v}-\e^{\ii\theta}\ket{w}$ is very close to zero. In such cases we can either increase the precision of the computation, or note that for very small norms ignoring the rotation is equivalent to allowing a small error.

\subsection{Standard Householder reflection}

For a standard Householder reflection we have $\phi = \pi$. Now choose $\theta=\pi-\arg(\ip{v}{w})$ or $\theta=0$ if $\ip{v}{w}=0$. This implies $z=1+|\ip{v}{w}|\neq 0$. Then we define
$$H_{v,w}=H_{u_\theta}.$$
This map has the property that $H_{v,w}\ket{v}=\e^{\ii\theta}\ket{w}$. 

Note that in this case $z$ is never close to $0$. Since the decompositions presented in the main text only use standard Householder reflections, we do not have to deal with numerical instabilites there.

\subsection{Generalized Householder reflection}

If we want to get rid of the phase $\e^{\ii\theta}$ we have to use a generalized Householder reflection. Setting $\theta=0$ implies that $z=1-\ip{v}{w}$ which might lead to numerical instabilities. Then $\phi=\pi+2\arg(z)$. We define
$$\tilde{H}_{v,w}=H_{u_\theta}^\phi,$$
so that $\tilde{H}_{v,w}\ket{v}=\ket{w}$.

\section{Proofs for the dense Householder decomposition}
\label{app:dense-householder}

Here we give the proofs of Lemma~\ref{lemma:dense-isometries} and Corollary~\ref{coro:dense-unitaries} stated in the main text. 

\begin{proof}[Proof of Lemma~\ref{lemma:dense-isometries}]
  This follows analogously to the proof of Theorem~1 of~\cite{isometries}, so we only point out the necessary modifications. The idea is to decompose the isometry via
  \begin{equation}\label{eq:dense_idea}
    V=S_{2^m-1}H_0^{\phi_{2^m-1}}S_{2^m-1}^\dagger\ldots S_0H_0^{\phi_0}S_0^\dagger\,,
  \end{equation}
  where $S_i$ implements state preparation of the $i^{\text{th}}$ column of $V$. Our modification is to replace the $2^m$ generalized Householder reflections by standard Householder reflections, and then to correct for the difference by using a diagonal gate on $m$ qubits at the end, i.e.,
  \begin{equation}\label{eq:dense}
    V=(\Delta_m\ot I_{n-m}) S_{2^m-1}H_0S_{2^m-1}^\dagger\ldots S_0H_0S_0^\dagger\,.
  \end{equation}
Lemma~\ref{lemma:reflections} shows that using one dirty ancilla qubit the standard Householder reflection $H_0$ can be implemented using $16n-24$ \cnots{} instead of $16n^2-60n+42$ used for the generalized version $H_0^{\phi}$. The cost of the diagonal gate is $2^m-2$ (see Table~\ref{tab:counts}).
\end{proof}

\begin{proof}[Proof of Corollary~\ref{coro:dense-unitaries}]
  First we claim that a controlled isometry from $m-1$ to $m$ qubits with $n-m$ controls can be implemented using at most $N(m-1,m)+16(n-m)2^{m-1}+2^{n-1}-2^{m-1}$ \cnots{} where $N(m,n)$ denotes the upper bound for implementing an isometry from $m$ to $n$ qubits given in Lemma~\ref{lemma:dense-isometries}\footnote{We write $N$ rather that $\mathcal{N}_{\mathrm{iso}}$ to emphasise that the modification is only valid based on the technique of Lemma~\ref{lemma:dense-isometries}.}. This follows from Lemma~\ref{lemma:dense-isometries} and the fact that when implementing the controlled isometry by controlling all the gates in~\eqref{eq:dense}, we do not need to control the state preparation gates or their inverses (i.e., the $S_i$ and $S_i^\dagger$ gates do not need controls). Thus the control does not affect the first three terms in the counts from Lemma~\ref{lemma:dense-isometries}.

  To decompose the unitary, we start by reducing the first half of the columns using the inverse of a circuit implementing an isometry from $n-1$ to $n$ qubits. This yields $\ket{0}\!\!\bra{0}\ot I+\ket{1}\!\!\bra{1}\ot U_1$, where $U_1$ is an $n-1$ qubit unitary. We can then control on the first qubit and do the inverse of an isometry from $n-2$ to $n-1$ qubits and so on. At each step we reduce half of the remaining columns and requires the inverse of an isometry from $n-k-1$ to $n-k$ qubits with $k$ controls, where $k=0,1,\ldots,n-1$.

  The \cnot{} count is thus
$$\sum_{k=0}^{n-1}N(n-k-1,n-k)+k2^{n-k+3}+2^{n-1}(1-2^{-k})\,.$$
This can be used with the counts from Lemma~\ref{lemma:dense-isometries} to upper bound the number of \cnots{}.
To generate a clean bound on the leading order term, note that if $n$ is even, the even values of $k$ contribute $\tfrac{23}{48}\sum_{t=0}^{(n-2)/2}2^{-4t}$ to the coefficient of $4^n$, while the odd values of $k$ contribute $\tfrac{115}{768}\sum_{t=0}^{(n-2)/2}2^{-4t}$, giving a leading order of $\tfrac{161}{240}4^n$.
Similarly, if $n$ is odd, the even values of $k$ contribute $\tfrac{115}{192}\sum_{t=0}^{(n-2)/2}2^{-4t}$ to the coefficient of $4^n$, while the odd values of $k$ contribute $\tfrac{23}{192}\sum_{t=0}^{(n-2)/2}2^{-4t}$, giving a leading order of $\tfrac{23}{30}4^n$.
\end{proof}

\section{Multi-controlled \nt{} gates}

We denote a $k$-controlled \nt{} gate by $C_{k}(X)$. Using a single dirty ancilla qubit such gates can be decomposed with a linear number of \textsc{C-not} gates. We start by recalling two lemmas from~\cite{barenco, isometries}.
\begin{lemma}(\cite[Lemma 7.3]{barenco}) \label{lemma:toff2}
Let $n \geq 5$ denote the total number of qubits. A $C_{n-2}(X)$ gate can be decomposed into two $C_{k}(X)$ and two $C_{n-k-1}(X)$, where $k\in\{2,3,\dots,n-3\}$.
\end{lemma}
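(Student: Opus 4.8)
The plan is to realize the gate explicitly using the single qubit that the statement leaves free. A $C_{n-2}(X)$ gate involves $n-2$ control qubits and one target qubit, i.e.\ $n-1$ qubits in total, so exactly one of the $n$ available qubits is unused; call it $a$ and treat it as a dirty ancilla whose initial value is arbitrary. I would split the $n-2$ controls into a set $A$ of size $k$ and a set $B$ of size $n-2-k$, and write down a four-gate circuit that uses $a$ as scratch space: first a $C_{n-k-1}(X)$ acting on the controls in $B$ together with $a$ (giving $n-2-k+1 = n-k-1$ controls) and targeting the original target $t$; then a $C_k(X)$ with controls $A$ and target $a$; and then a repetition of each of these two gates, in the same order. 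This yields two $C_k(X)$ gates and two $C_{n-k-1}(X)$ gates, as claimed.

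For the verification I would track the action on an arbitrary computational basis state. Writing $\alpha$ for the logical \textsc{and} of the controls in $A$ and $\beta$ for that of the controls in $B$, the four gates (applied in time order) transform the pair $(t,a)$ as
\begin{align*}
(t,a)&\mapsto(t\oplus\beta a,\,a)\mapsto(t\oplus\beta a,\,a\oplus\alpha)\\
&\mapsto(t\oplus\alpha\beta,\,a\oplus\alpha)\mapsto(t\oplus\alpha\beta,\,a),
\end{align*}
where the key cancellation is $\beta a\oplus\beta(a\oplus\alpha)=\beta\alpha$ in the third step. Thus the target is flipped exactly when $\alpha\beta=1$, i.e.\ when all $n-2$ controls are set, while $a$ is returned to its initial value for either value of $a$. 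This is precisely the action of $C_{n-2}(X)$, so the decomposition is correct.

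The only real subtlety, and what forces the doubled structure, is that $a$ is dirty: one cannot simply compute $\alpha$ into a clean $a$, apply a single controlled gate, and uncompute, because the spurious $\beta a$ term introduced by the first $C_{n-k-1}(X)$ must be removed without knowing $a$. The two copies of the $C_{n-k-1}(X)$ gate, conjugated by the toggling of $a$ between them, are exactly what makes this term cancel independently of the ancilla's initial state. It then remains only to check the bookkeeping on sizes, which I expect to be routine: the hypothesis $n\geq5$ guarantees that the index set $\{2,\dots,n-3\}$ is nonempty, and any $k$ in this range gives both $k\geq2$ and $n-k-1\geq2$, so both resulting gate types are genuine multi-controlled \nt{} gates with strictly fewer than $n-2$ controls (which is what makes the identity useful as a recursion step), and each acts on fewer than $n$ qubits, leaving at least two free qubits available as ancillas for any subsequent decomposition.
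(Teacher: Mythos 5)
Your proposal is correct: the four-gate circuit with the unused qubit as a dirty ancilla, verified by tracking $(t,a)$ through the toggling cancellation $\beta a\oplus\beta(a\oplus\alpha)=\alpha\beta$, is exactly the construction of \cite[Lemma~7.3]{barenco}, which is all the paper itself invokes (it gives no independent proof, only the citation). Since every gate involved is a permutation of computational basis states, checking basis states suffices, so your argument is complete and matches the standard one.
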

\begin{lemma}(\cite[Lemma 8]{isometries}) \label{lemma:toff1}
Let $n \geq 5$ denote the total number of qubits and $k \in \{1,\dots, \ceil{\frac{n}{2}}\}$, then we can implement a $C_{k}(X)$ gate with at most 8k-6 \textsc{C-not}s.
\end{lemma}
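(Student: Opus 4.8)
The plan is to give an explicit borrowed-ancilla network for $C_k(X)$ and count \cnots{} directly, with Lemma~\ref{lemma:toff2} available as a splitting primitive. The first thing I would check is that the hypothesis $k\le\ceil{n/2}$ supplies enough spare qubits to serve as dirty ancillas: since $k\le\ceil{n/2}$ we have $n\ge 2k-1$, so the $k+1$ qubits carrying the $k$ controls and the target leave at least $n-k-1\ge k-2$ further qubits free. These can be borrowed as dirty work qubits, and it is precisely this surplus that puts us in the regime where $C_k(X)$ admits a \emph{linear}-size network rather than the generic quadratic one of Table~\ref{tab:counts}.

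I would then argue by induction on $k$, with base cases $k=1$ (a single \cnot{}, cost $1\le 2$) and $k=2$ (a Toffoli, cost $6\le 10$), both comfortably inside $8k-6$. For $k\ge 3$ I would implement $C_k(X)$ by a compute--flip--uncompute ``V-chain'': using the $k-2$ borrowed qubits, build the conjunction of the controls through a forward ladder of $k-1$ Toffoli gates, flip the target with the final gate, and then run the $k-2$ intermediate Toffolis in reverse to restore the (dirty) ancillas to their initial states. The key economy is that every Toffoli writing into an ancilla is paired with its inverse in the uncomputation, so these gates need only be correct \emph{up to a relative phase}; replacing each by its cheaper relative-phase variant keeps the \cnot{} cost linear in $k$ and, with dirty ancillas forcing the full two-sided ladder, pins the slope at $8$, yielding $8k-6$ rather than the clean-ancilla $6k-6$ or the single-dirty-ancilla $16k-8$. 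As an alternative to the direct ladder, one may instead apply Lemma~\ref{lemma:toff2} just once to split the controls into two nearly equal halves -- each now sitting in an even more ancilla-rich environment -- and terminate immediately with the linear ladder, rather than recursing (recursion would reintroduce the doubling that produces $16k-8$).

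The main obstacle is the exact bookkeeping of the constant. One must (i) fix the precise relative-phase Toffoli variant and verify that the accumulated relative phases of the forward and backward ladders cancel exactly, so the implemented unitary is genuinely $C_k(X)$ and not merely $C_k(X)$ up to phase; (ii) treat the two endpoints of the ladder -- the final gate flipping the target (which is never uncomputed and so must be exact) and the first conjunction -- separately, since they cannot both be taken relative-phase; and (iii) keep the inequality $n-k-1\ge k-2$ in force throughout so the ladder always has room. Once these are settled, summing the per-gate costs along the chain gives at most $8k-6$, with the generous slack in the small cases absorbing the endpoint corrections.
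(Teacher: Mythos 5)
There is a genuine gap, and it sits exactly where the real work of this lemma lies. Note first that the paper does not prove this statement at all: it is imported verbatim from \cite[Lemma~8]{isometries}, so your proposal has to stand on its own --- and it does not, for two reasons. First, the network you describe --- a forward ladder of $k-1$ Toffolis, a target flip, and the reversal of the $k-2$ intermediate Toffolis --- is the \emph{clean}-ancilla V-chain. The lemma, however, concerns a $C_k(X)$ acting on $k+1$ of the $n$ qubits of a larger computation, so the $n-k-1$ borrowed qubits are dirty, as you yourself note. With dirty ancillas the single-pass chain simply does not implement $C_k(X)\ot\id$: already for $k=3$, the circuit Toffoli$(c_1,c_2;a)$, Toffoli$(c_3,a;t)$, Toffoli$(c_1,c_2;a)$ maps $\ket{t}$ to $\ket{t\oplus c_1c_2c_3\oplus c_3a}$, leaving a spurious flip controlled by the unknown ancilla value $a$. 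Removing such terms is what forces the doubled (``two-sided'') network of \cite[Lemma~7.2]{barenco}, which contains $4(k-2)$ Toffoli-type gates, not $2k-3$. Your text gestures at this (``the full two-sided ladder''), but the construction you actually describe and count is the one-sided one, so the proposal is internally inconsistent at its central step.

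Second, even after this repair your counting cannot reach $8k-6$. The only costing ingredient you invoke is the $3$-\cnot{} relative-phase (Margolus) Toffoli, so per-gate accounting over the corrected network gives at least $3\cdot 4(k-2)=12k-24$ \cnots{} --- and in practice more, since the two gates acting on the target cannot both be taken relative-phase --- which exceeds $8k-6$ for every $k\geq 5$ (and already $12k-18$ exceeds it for $k\geq4$). In other words, no assignment of $3$-\cnot{} gadgets to the Barenco network has slope $8$; your claim that dirty ancillas ``pin the slope at $8$'' is reverse-engineered from the answer rather than derived from the construction. Achieving slope $8$ with dirty ancillas requires genuinely more structure --- cancellation of \cnots{} between adjacent gates of the ladder, or larger relative-phase gadgets in the style of \cite{toffoli-ancilla} --- and supplying that structure is precisely the content of \cite[Lemma~8]{isometries}. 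Your fallback route (one application of Lemma~\ref{lemma:toff2} followed by linear ladders) fails for the same reason: the sub-gates still face dirty ancillas, and once they are implemented correctly the count roughly doubles, landing near the $16n-40$ of Corollary~\ref{cor:toffoli} rather than at $8k-6$. The parts of your proposal that do hold --- the base cases $k=1,2$ and the arithmetic $n-k-1\geq k-2$ --- are the easy ones.
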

Note that if $k\geq5$ this bound can be reduced to $8k-12$~\cite{toffoli-ancilla}. However, we do not use this here for the convenience of having a single bound for all $k\leq\lceil\tfrac{n}{2}\rceil$. The desired decomposition of multi-controlled \nt{} gates using a single ancilla qubit follows.
\begin{corollary} \label{cor:toffoli}
Let $n \geq 3$ denote the total number of qubits. Then we can implement a $C_{n-2}(X)$ gate with at most $16n-40$ \textsc{C-not}s.
\end{corollary}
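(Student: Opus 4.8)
The plan is to apply the splitting of Lemma~\ref{lemma:toff2} exactly once and then bound each of the four resulting gates using Lemma~\ref{lemma:toff1}. The crucial observation, which makes the bookkeeping painless, is that the total count obtained from Lemma~\ref{lemma:toff1} will be independent of the split point $k$: since the four pieces have control numbers $k,k,n-k-1,n-k-1$ summing to $2(n-1)$, and each $C_c(X)$ costs at most $8c-6$, the total telescopes to a quantity depending only on $n$. So the only genuine work is to choose $k$ so that all four gates fall in the regime where Lemma~\ref{lemma:toff1} is applicable.

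First I would dispose of the small cases $n\in\{3,4,5\}$ directly. For $n=3$ the gate $C_{n-2}(X)=C_1(X)$ is a single \cnot{}; for $n=4$ it is a Toffoli, which needs at most $6$ \cnots{}~\cite{barenco}; and for $n=5$ the gate $C_3(X)$ has $n-2=3=\lceil n/2\rceil$ controls, so Lemma~\ref{lemma:toff1} applies directly, giving at most $8\cdot3-6=18$ \cnots{}. In each case the count is well below $16n-40$, which equals $8$, $24$, and $40$ respectively.

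For $n\geq6$ I would invoke Lemma~\ref{lemma:toff2} with the choice $k=\lceil n/2\rceil$, decomposing $C_{n-2}(X)$ into two copies of $C_k(X)$ and two copies of $C_{n-k-1}(X)$ on the full set of $n$ qubits (the one qubit not touched by $C_{n-2}(X)$ serving as the dirty ancilla needed inside Lemma~\ref{lemma:toff1}). One checks this is admissible: $2\leq\lceil n/2\rceil\leq n-3$ holds precisely because $\lfloor n/2\rfloor\geq3$ for $n\geq6$, and the two control numbers $k=\lceil n/2\rceil$ and $n-k-1=\lfloor n/2\rfloor-1$ both lie in $\{1,\dots,\lceil n/2\rceil\}$. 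Applying Lemma~\ref{lemma:toff1} to every piece then yields
$$2(8k-6)+2\bigl(8(n-k-1)-6\bigr)=8\cdot2(n-1)-24=16n-40\,,$$
with the $k$-dependence cancelling exactly as anticipated.

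The step needing the most care is this range bookkeeping: one must simultaneously satisfy the constraint $k\in\{2,\dots,n-3\}$ coming from Lemma~\ref{lemma:toff2} and keep both $k$ and $n-k-1$ at most $\lceil n/2\rceil$ so that Lemma~\ref{lemma:toff1} is usable on all four gates. The choice $k=\lceil n/2\rceil$ threads this needle uniformly for $n\geq6$, and the direct small-$n$ verification above closes the remaining low-qubit cases.
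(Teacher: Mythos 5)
Your proof is correct and follows essentially the same route as the paper: split once via Lemma~\ref{lemma:toff2} at a near-balanced $k$, bound all four pieces with Lemma~\ref{lemma:toff1}, and observe that the $k$-dependence cancels to give $16n-40$. The only difference is cosmetic: the paper takes $k=\lfloor n/2\rfloor$, which satisfies $2\leq k\leq n-3$ already for $n\geq5$, whereas your choice $k=\lceil n/2\rceil$ forces the extra base case $n=5$, which you handle correctly.
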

\begin{proof}
The case $n=3$ is trivial and $n=4$ can be done with 6 \cnots~\cite[Sec.~VI A]{barenco}. For $n\geq5$ we can use Lemma~\ref{lemma:toff2}, choosing $k=\lfloor\frac{n}{2}\rfloor$ (note that $2\leq k\leq n-3$). We then have both $k\leq\lceil\frac{n}{2}\rceil$ and $n-k-1\leq\lceil\frac{n}{2}\rceil$, so can use Lemma~\ref{lemma:toff1}.  This gives a count of at most
$$2(8k-6)+2(8(n-k-1)-6)=16n-40\,,$$
as required.
\end{proof}

\section{Permutation gates}\label{app:perm}
One key feature of several of our methods is the use of permutations to adjust the form of the given isometry. Here we discuss the number of \cnots{} needed for these.

\begin{lemma}
A permutation gate on $n\geq3$ qubits can be performed without ancilla using $(27n-62)2^n+44n^2-96n-23$ \cnots.
\end{lemma}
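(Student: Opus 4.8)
The plan is to reduce an arbitrary permutation of the $2^n$ computational basis states to a product of transpositions and to implement each transposition by a combination of CNOTs and a single multi-controlled NOT gate, counting CNOTs with the help of Table~\ref{tab:counts} and Corollary~\ref{cor:toffoli}. First I would recall that any permutation of $N=2^n$ points is a product of at most $N-1=2^n-1$ transpositions (decompose into disjoint cycles and write a cycle of length $\ell$ as $\ell-1$ transpositions). A transposition swapping two basis states $\ket{i}$ and $\ket{j}$ can be implemented by first using CNOTs to bring $i$ and $j$ to indices differing in only a single bit --- at most $n-1$ CNOTs, undone afterwards --- and then flipping that bit conditioned on the remaining $n-1$ bits, i.e.\ a single $C_{n-1}(X)$. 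As a consistency check this reproduces the one-dirty-ancilla entry of Table~\ref{tab:counts}: with a dirty ancilla a $C_{n-1}(X)$ costs $16(n-1)-8=16n-24$, and together with $2(n-1)$ alignment CNOTs each transposition costs $18n-26$, giving $(18n-26)2^n$.

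The key obstacle in the no-ancilla case is that the aligned transposition occupies all $n$ qubits, so no qubit is free to serve as a dirty ancilla for the multi-controlled NOT, and an isolated $C_{n-1}(X)$ with no ancilla costs $16(n-1)^2-28(n-1)-2=\Theta(n^2)$ by Table~\ref{tab:counts}; using this for $\Theta(2^n)$ transpositions would spoil the linear-in-$n$ leading coefficient. The resolution I would pursue is to organise the elimination so that every multi-controlled NOT uses at most $n-2$ controls, leaving one data qubit free to act as a dirty ancilla; then Corollary~\ref{cor:toffoli} (together with Lemmas~\ref{lemma:toff2} and~\ref{lemma:toff1}) gives a linear cost of $16n-40$ per such gate. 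Concretely, rather than isolating single basis states one at a time, one works in the reduced register in the spirit of the pivoting of Lemma~\ref{lemma:pivot} and Algorithm~\ref{alg:piv}, handling the last two qubits jointly so that at most $n-2$ controls are ever needed; the extra routing CNOTs incurred to respect this constraint are what push the per-step cost from $18n-26$ up to the $27n$ appearing in the leading term $(27n-62)2^n$.

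The main work --- and the main obstacle --- is then exact bookkeeping rather than any new idea. I would sum the per-transposition costs over the at most $2^n-1$ transpositions, add the alignment CNOTs, treat the base cases of small control number (where the bounds of Table~\ref{tab:counts} change form, e.g.\ the $k\le 1$ or Toffoli $k=2$ cases), and finally account for the $O(1)$ multi-controlled gates near the end of the sweep that genuinely have no free qubit and must be paid for at the quadratic no-ancilla rate $16k^2-28k-2$; it is exactly this handful of $\Theta(n^2)$ gates, together with the lower-order terms of the linear gate counts, that assembles into the correction $44n^2-96n-23$. The delicate point is ensuring that only a constant number of gates incur the quadratic cost, so that the correction stays $O(n^2)$ rather than $O(n^3)$, and then matching every constant in $(27n-62)2^n+44n^2-96n-23$, which requires carefully tracking how the alignment CNOTs and the $C_{n-2}(X)$ decomposition overheads accumulate.
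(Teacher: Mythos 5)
Your reduction to transpositions runs into a parity obstruction at exactly the step you rely on. On $n\geq3$ qubits a \nt{} acts on the computational basis as $2^{n-1}$ disjoint transpositions, a \cnot{} as $2^{n-2}$, and a $k$-controlled \nt{} as $2^{n-1-k}$; all of these are \emph{even} permutations whenever $k\leq n-2$. Consequently any circuit built from \cnots{} and multi-controlled \nts{} with at most $n-2$ controls implements an even permutation of the basis, whereas a single transposition is odd. So your central device --- realize each transposition by routing \cnots{} plus multi-controlled \nts{} with at most $n-2$ controls, keeping one data qubit free as a dirty ancilla --- is impossible for \emph{every} transposition, not merely for ``a handful of gates near the end of the sweep''. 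The appeal to the pivoting of Lemma~\ref{lemma:pivot} does not rescue this: in Algorithm~\ref{alg:piv} the extra pair of basis states swapped by a $C_{n-2}(X)$ (the other value of the idle qubit) carries zero amplitude and is a don't-care, but a permutation gate is a full unitary with no don't-cares, so that collateral swap is a genuine error. Note that your per-transposition scheme is essentially the paper's proof of the \emph{one-dirty-ancilla} bound $(18n-26)2^n$ (via Householder reflections), where it is legitimate because the ancilla makes each $C_{n-1}(X)$ cost $16n-24$; removing the ancilla cannot be fixed by shaving one control per transposition, and repairing it forces you to group transpositions into even products and synthesize those from gates with few controls --- which is a substantial theorem, not bookkeeping.

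That theorem is precisely what the paper's proof imports: by~\cite[Theorem~33]{reversible}, every \emph{even} permutation on $n\geq3$ qubits can be implemented without ancilla by $n$ \nts{}, $n^2$ \cnots{} and $3(2^n+n+1)(3n-7)$ Toffoli gates. The paper converts each Toffoli into $3$ \cnots{} up to a diagonal gate~\cite{barenco}, commuting all diagonals to the end, which gives $(27n-63)2^n+28n^2-36n-63$ \cnots{} up to diagonal; it restores the parity of an odd permutation by composing with one $(n-1)$-controlled single-qubit gate at the quadratic no-ancilla cost $16n^2-60n+42$ from Table~\ref{tab:counts}; and it pays $2^n-2$ \cnots{} for the final diagonal gate. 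These three contributions are exactly where $(27n-62)2^n$ and $44n^2-96n-23$ come from. Your proposed origins of the constants --- routing \cnots{} inflating $18n-26$ to $27n$, and a few quadratic-cost gates assembling into $44n^2-96n-23$ --- are fitted to the answer rather than derived: nothing in your construction produces the factor-$3$ Toffoli-up-to-diagonal costs or the final diagonal correction, and conjugating a single $C_{n-2}(X)$ by \cnots{} can only realize a pair of transpositions whose two swapped pairs have equal bitwise XOR, which fails for generic pairs arising in a cycle decomposition.
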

\begin{proof}
  Permuting the rows of a state on $n$ qubits corresponds to constructing a $2^n\times2^n$ permutation matrix (i.e., a unitary matrix with one 1 in every row and column). Each such matrix corresponds to a permutation on $2^n$ objects. It is known that all permutations can be decomposed as a sequence of swaps. A permutation is \emph{even} if it can be decomposed into an even number of swaps and otherwise it is \emph{odd}. It is known that all even permutations on $n\geq3$ qubits can be performed with at most $n$ \nts{}, $n^2$ \cnots{} and $3(2^n+n+1)(3n-7)$ Toffoli gates~\cite[Theorem~33]{reversible}. A Toffoli gate can be performed up to a diagonal gate using $3$ \cnots{}~\cite{barenco} and diagonal gates can be commuted with \nt{}, \cnot{} and Toffoli gates up to another diagonal.  Thus, ignoring the single-qubit (\nt{}) gates, any even permutation can be decomposed into at most $(27n-63)2^n+28n^2-36n-63$ \cnots{} without ancilla, up to a diagonal gate.

  If we have an odd permutation on $n$ qubits, we can apply an $(n-1)$-controlled \nt{} to make it an even permutation.  Without an ancilla, we can do this as a $n-1$ controlled single-qubit unitary with an overhead of $16n^2-60n+42$ \cnots{} (see Table~\ref{tab:counts}). This leads to an overall \cnot{} count of $(27n-63)2^n+44n^2-96n-21$ \cnots{} without ancilla\footnote{Slightly lower counts are possible with ancilla but these do not change the leading order so we do not consider them here for simplicity.}.

  A diagonal gate on $n$ qubits can be performed using $2^n-2$
  \cnots{} (see Table~\ref{tab:counts}), leading to an overall \cnot{}
  count of $(27n-62)2^n+44n^2-96n-23$.
\end{proof}
Slightly lower counts are possible with ancillas, but these do not change the leading order so we do not consider them here for simplicity.  Note also that the above bound is always less than $27n2^n$ for $n\geq3$, so we use the latter as a simplification.

Any unitary on $n\geq2$ qubits can be decomposed using at most $\lceil\frac{23}{48}4^n-\frac{3}{2}2^n+\frac{4}{3}\rceil$ \cnots~\cite{diagonal}
(without ancilla) which gives a better count for an arbitrary permutation for $n\leq8$.

[In~\cite{reversible}, the authors note that if we add a qubit on which we do not act, an odd permutation becomes even and hence any permutation on $n\geq3$ qubits can be done using one ancilla and an even permutation on $n+1$ qubits~\cite[Corollary~13]{reversible}. However, this is significantly worse than the above bound.]

\begin{lemma}
A permutation gate on $n\geq2$ qubits can be performed with one dirty ancilla and at most $(18n-26)(2^n-1)$ \cnots.
\end{lemma}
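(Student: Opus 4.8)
The plan is to reduce the implementation of an arbitrary permutation to a sequence of transpositions of computational basis states, each of which can be realized cheaply using a single multi-controlled \nt{} gate together with a few \cnots. First I would recall that any permutation of the $2^n$ computational basis states factors into disjoint cycles, and that a cycle of length $\ell$ is a product of $\ell-1$ transpositions; summing over the cycles, the total number of transpositions needed is $2^n$ minus the number of cycles, hence at most $2^n-1$ (the worst case being a single $2^n$-cycle). So it suffices to implement one transposition using at most $18n-26$ \cnots{} and multiply by $2^n-1$.

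Next I would show how to implement a single transposition $(a\,b)$ of two basis states $\ket{a}$ and $\ket{b}$ at Hamming distance $d\leq n$. Pick a bit position $p$ on which $a$ and $b$ differ. For each of the remaining $d-1$ positions $q$ on which they differ, apply a \cnot{} with control $p$ and target $q$; since $a_p\oplus b_p=1$, this equalizes the two states on position $q$ while leaving $p$ untouched. After these $d-1$ \cnots{} the images $\ket{a'}$ and $\ket{b'}$ agree on every qubit except $p$, so an $(n-1)$-controlled \nt{} targeting qubit $p$ (controlling on the shared values of the other qubits, conjugating by \nts{} to handle controls on $\ket{0}$) swaps $\ket{a'}$ and $\ket{b'}$ and fixes all other basis states. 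Undoing the \cnots{} then yields exactly the transposition: conjugating a gate that only swaps $\ket{a'},\ket{b'}$ by the \cnot{} network $P$ (a bijection on basis states) produces the swap of $P^{-1}\ket{a'}=\ket{a}$ and $P^{-1}\ket{b'}=\ket{b}$ with everything else fixed.

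The counting then proceeds as follows. The forward and backward \cnot{} networks use at most $2(d-1)\leq 2(n-1)$ \cnots. The single $(n-1)$-controlled \nt{} gate $C_{n-1}(X)$, implemented with the one dirty ancilla available, costs $16(n-1)-8=16n-24$ \cnots{} by the count in Table~\ref{tab:counts} (which follows from Corollary~\ref{cor:toffoli} with $k=n-1$, valid for $n\geq2$). Hence one transposition costs at most $2(n-1)+(16n-24)=18n-26$ \cnots, and the full permutation costs at most $(18n-26)(2^n-1)$ \cnots, as claimed.

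I expect no deep obstacle; the construction is elementary and the effort is in the bookkeeping. The points requiring care are verifying that conjugating the multi-controlled \nt{} by the \cnot{} network realizes precisely the intended transposition of $\ket{a}$ and $\ket{b}$ without disturbing the other $2^n-2$ basis states, and checking that the dirty-ancilla bound $16(n-1)-8$ for $C_{n-1}(X)$ remains applicable down to $n=2$ (where it is loose but still a valid upper bound). I would also note explicitly that the single-qubit \nt{} gates used for distance reduction and for control conjugation are free in this \cnot-only counting.
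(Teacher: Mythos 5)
Your proof is correct and is essentially the paper's own argument: the paper also writes the permutation as a product of at most $2^n-1$ basis-state transpositions, each implemented by conjugating a multi-controlled gate (costing $16n-24$ \cnots{} with one dirty ancilla) by at most $n-1$ \cnots{} on each side, for $18n-26$ per transposition. The only cosmetic differences are that the paper phrases each transposition as a Householder reflection $H_{j(i),i}$ realized via a multi-controlled $Z$ rather than a multi-controlled \nt{}, and obtains the $2^n-1$ bound by fixing columns one at a time rather than via your cycle-decomposition argument.
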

\begin{proof}
  The idea is to use Householder reflections.  We can write the permutation gate as $\sum_{i=0}^{2^n-1}\ket{j(i)}\!\!\bra{i}$, where $\ket{j(i)}$ is a computational basis state. The Householder reflection $H_{j(i),i}$ takes $\ket{j(i)}\mapsto\ket{i}$ and $\ket{i}\mapsto\ket{j(i)}$ without affecting any other columns (cf.\ Corollary~\ref{corr:fill-in}).

  Since $H_{j(i),i}=H_u$, where $\ket{u}=\frac{1}{\sqrt{2}}\left(\ket{i}-\ket{j(i)}\right)$, we can implement each Householder reflection along the lines given in Lemma~\ref{lemma:sp}.  The state $\ket{u}$ can be reduced to $\ket{i}$ as follows.  Let $\ket{i}=\ket{i_1i_2\ldots i_n}$ and $\ket{j(i)}=\ket{j_1j_2\ldots j_n}$ in the computational basis. Find an index $k$ such that $j_k\neq i_k$. Controlling on the $k^{\text{th}}$ qubit we can apply at most $n-1$ \cnots{} such that $\ket{j(i)}\mapsto\ket{i_1\ldots i_{k-1}j_ki_{k+1}\ldots i_n}$ and $\ket{i}$ is unchanged.  When applied to $\ket{u}$, this results in $\ket{u'}=\frac{1}{\sqrt{2}}\left(\ket{i}-X_k\ket{i}\right)$, where $X_k$ is a \nt{} on the $k^{\text{th}}$ qubit.  Applying a single qubit rotation to the $k^{\text{th}}$ qubit then maps $\ket{u'}$ to $\ket{i}$. Let us denote the reverse of these steps by $\SP_{i,u}$, so that $\SP_{i,u}\ket{i}=\ket{u}$.  We have
  $$H_{\ket{u}}=\SP_{i,u}^{\phantom{\dagger}}H_i\SP_{i,u}^\dagger\,,$$
  where $H_i=I-2\ket{i}\!\!\bra{i}$ is either a $Z$ or $-Z$ gate with $n-1$ controls, and hence has the \cnot{} count of $C_{n-1}(X)$, which is $16n-24$ if one dirty ancilla is available (see Table~\ref{tab:counts}).

  It follows that we can do $H_{j(i),i}$ using $18n-26$ \cnots{}, and hence the whole permutation matrix using at most $(18n-26)(2^n-1)$ \cnots{}.
\end{proof}
Again, for simplicity, we can bound this by $(18n-26)2^n$.

\section{Knill-Householder decomposition} \label{app:knill-householder}
In this appendix we consider a generalized decomposition scheme that contains Knill's decomposition~\cite{knill} and the Householder decomposition as special cases. Suppose $U$ is a unitary that we want to implement and $B$ is another unitary representing a change of basis. Assume that $B$ and $\tilde{U} = B^\dagger UB$ are sparse. Let $\ket{b_i} = B\ket{i}$, then $\ket{\tilde{u}_i} := UB\ket{i} = U\ket{b_i}$ and
$$U=\sum_i U\op{b_i}{b_i} = \sum_i \op{\tilde{u}_i}{b_i}.$$
Consider the generalized Householder reflection $\tilde{H}_{\tilde{u}_0,b_0}$ mapping $\ket{\tilde{u}_0}$ to $\ket{b_0}$ (in this section we want this map to be exact and not up to a phase). In the basis formed by the columns of $B$ this reduces the first column of $\tilde{U}$. Consider the effect on the second column $\ket{\tilde{u}_1}$. Since it is orthogonal to $\ket{\tilde{u}_0}$, this column will suffer fill-in if and only if it is not orthogonal to $\ket{b_0}$. In this case, fill-in will be confined to the subspace spanned by $\ket{\tilde{u}_0}$ and $\ket{b_0}$. This means that fill-in is determined by the structure of $\tilde{U}$ and works in the same way as for the Householder reflection in the standard basis. It is important to note however that state preparation is only efficient if $\ket{\tilde{u}_0}$ is sparse, i.e., if both $\tilde{U}$ and $B$ are sufficiently sparse. Continuing in this fashion allows us to reduce $U$ to the identity.

\begin{remark}
If we define $B$ to be the unitary whose columns form an eigenbasis of $U$, then this scheme reduces to Knill's decomposition, and if $B$ is the identity, it is essentially the Householder decomposition. Note however that in the case of the Householder decomposition we used standard Householder reflections and it was sufficient to implement them up to diagonal and permutation gates.
\end{remark}

This method can be generalized to isometries by extending a given isometry $V$ to a unitary $U$. This can be done such that $U$ has at least $2^n-2^m$ eigenvalues equal to 1 (see~\cite{knill}, \cite[Lemma~5]{compiler}). Let $B$ be an $n$ qubit unitary whose first $2^n-2^m$ columns are eigenvectors of $U$ with eigenvalue 1. Then $\tilde{U}=B^\dagger UB$ is blockdiagonal with a $(2^n-2^m)\times(2^n-2^m)$ trivial block and an $2^m\times2^m$ non-trivial block. This observation can be used to reduce $U$ as described above.

\end{document}